\numberwithin{equation}{section}
\newcommand*\diff{\mathop{}\!\mathrm{d}}
\pgfplotsset{width = 10cm, compat = 1.9}
\newtheorem*{lemma}{Lemma}
\DeclareSymbolFont{yhlargesymbols}{OMX}{yhex}{m}{n} \DeclareMathAccent{\yhwidehat}{\mathord}{yhlargesymbols}{"62} 
\newcommand{\coin}[1]{\left[\!\left[#1\right]\!\right]}
\DeclareMathOperator{\cosech}{cosech}
\DeclareMathOperator{\sinc}{sinc}
\DeclareMathOperator{\sinhc}{sinhc}
\DeclareMathOperator{\Span}{span}
\DeclareMathOperator{\WF}{WF}
\DeclareMathOperator{\sgn}{sgn}
\newcommand{\kb}{\boldsymbol{k}}
\newcommand{\xb}{\boldsymbol{x}}
\definecolor{orcidlogocol}{HTML}{A6CE39}
\tikzset{
	orcidlogo/.pic={
		\fill[orcidlogocol] svg{M256,128c0,70.7-57.3,128-128,128C57.3,256,0,198.7,0,128C0,57.3,57.3,0,128,0C198.7,0,256,57.3,256,128z};
		\fill[white] svg{M86.3,186.2H70.9V79.1h15.4v48.4V186.2z}
		svg{M108.9,79.1h41.6c39.6,0,57,28.3,57,53.6c0,27.5-21.5,53.6-56.8,53.6h-41.8V79.1z M124.3,172.4h24.5c34.9,0,42.9-26.5,42.9-39.7c0-21.5-13.7-39.7-43.7-39.7h-23.7V172.4z}
		svg{M88.7,56.8c0,5.5-4.5,10.1-10.1,10.1c-5.6,0-10.1-4.6-10.1-10.1c0-5.6,4.5-10.1,10.1-10.1C84.2,46.7,88.7,51.3,88.7,56.8z};
	}
}
\newcommand\orcidicon[1]{\href{https://orcid.org/#1}{\mbox{\scalerel*{
				\begin{tikzpicture}[yscale=-1,transform shape]
					\pic{orcidlogo};
				\end{tikzpicture}
			}{|}}}}
\title{Quantum Energy Inequalities along stationary worldlines}
\author{Christopher J. Fewster\,\orcidicon{0000-0001-8915-5321} ${}^{1,}$\thanks{\tt chris.fewster@york.ac.uk}
~and Jacob Thompson\,\orcidicon{0000-0003-1047-9257}${}^{1,2,}$\thanks{\tt jthompson16@sheffield.ac.uk} \\ 
${}^1$\,\emph{\small Department of Mathematics,
	University of York, Heslington, York YO10 5DD, United Kingdom.}
\\
${}^2$\,\emph{\small  School of Mathematics and Statistics, The University of Sheffield, Hicks Building, Hounsfield Road,}
\\
\emph{\small  Sheffield S3 7RH, United Kingdom.}}
\date{\today}
\begin{document}

\maketitle

\begin{abstract}
    Quantum energy inequalities (QEIs) are lower bounds on the averaged energy density of a quantum field. They have been proved for various field theories in general curved spacetimes but the explicit lower bound is not easily calculated in closed form. In this paper we study QEIs for the massless minimally coupled scalar field in four-dimensional Minkowski spacetime along stationary worldlines -- curves whose velocity evolves under a $1$-parameter Lorentz subgroup -- and find closed expressions for the QEI bound,
    in terms of curvature invariants  of the worldline. Our general results are illustrated by specific computations for the six prototypical stationary worldlines.    
    When the averaging period is taken to infinity, the QEI bound is consistent with a constant energy density along the worldline.  For inertial and uniformly linearly accelerated worldlines, this constant value is attained by the Minkowski and Rindler vacuums respectively. It is an open question as to whether the bounds for other stationary worldlines are attained by other states of interest. 
\end{abstract}

\section{Introduction}

Even if a classical field theory obeys local energy conditions, such as positivity of energy density, the
corresponding quantum field theory (QFT) will fail to do so, as a result of a general theorem~\cite{EpsteinGlaserJaffe:1965}.
In fact, it is typical that the expectation value of energy density at any given point can be made arbitrarily negative by a suitable choice of the quantum state~\cite{Fews05}. 
This surprising fact is deeply related to the uncertainty principle. As unrestricted negative energy densities, or negative energy fluxes, could produce
a range of effects, ranging from violations of the second law of thermodynamics~\cite{Ford1978QuantumCE} and evasion of the classical singularity theorems of Penrose~\cite{Penrose:1965} and Hawking~\cite{Hawking:1966}, to the ability to construct warp drives~\cite{Alcubierre_1994} or wormholes~\cite{MorrisThorne:1988}, it is important to understand what restrictions might be imposed by QFT itself. In this paper we will be concerned with \emph{Quantum Energy Inequalities} (QEIs), which provide lower bounds on local averages of the expected energy density, independent of the quantum state.
  
Starting with results of Ford and Roman~\cite{Ford1978QuantumCE,Ford:1991,FordRoman:1995} QEIs have been derived for a variety of quantum fields in flat and curved spacetimes. References and discussion may be found in the recent reviews~\cite{Fewster2017,KontouSanders:2020}.
At the simplest level, the QEIs constrain the magnitude and duration of violations of classical energy conditions, placing stringent constraints on attempts to use quantum fields to provide the exotic matter required to construct exotic spacetimes~\cite{FordRoman:1996worm,PfenningFord:1997,FewsterRoman:2005}. It has also been shown
that QEI restrictions are sufficient for modified versions of singularity theorems to hold~\cite{FewsterKontou2022}.

For example, consider the real scalar field of mass $m\ge 0$ in any globally
hyperbolic spacetime $(M,g)$, recalling that global hyperbolicity demands only that
the spacetime possesses a global Cauchy surface. Let $\gamma(s)$ be any smooth
timelike curve, parameterised by proper time. It was shown in~\cite{GenQEI2000CFewster} that the energy density of the quantum field along $\gamma$ obeys the QEI 
\begin{equation}\label{eq:genQEI}
    \int_{-\infty}^\infty \diff{s}   |g(s)|^2\langle\mathbf{:}T_{\mu\nu} \dot{\gamma}^\mu\dot{\gamma}^\nu\mathbf{:}\rangle_\omega(\gamma(s))\ge
    -\int_0^\infty \frac{\diff{\alpha}}{\pi} \widehat{g\otimes g T}(-\alpha,\alpha) >-\infty,
\end{equation}
which holds for all real-valued compactly supported smooth test functions $g$, 
and all Hadamard states $\omega$ of the field. Here, the hat denotes a Fourier transform, defined according to the convention $\hat{g}(\alpha)=\int_{-\infty}^\infty \diff{s} \,e^{i\alpha s}g(s)$, and we employ units where $\hbar=c=1$, which will be in force throughout this paper. 
On the left-hand side, the normal ordering is conducted with respect to an arbitrary Hadamard reference state $\omega_0$, whose two-point function is used to construct the distribution $T(s,s')$ that appears on the right-hand side.  Recall also that the Hadamard states form a large class of physically reasonable states, determined by their short-distance structure~\cite{KAY1991,Moretti:2021}. The two most important features of the QEI~\eqref{eq:genQEI} are that the right-hand side is completely independent of the state $\omega$, and that the bound is finite -- which is proved using the microlocal properties of Hadamard states uncovered by Radzikowski~\cite{Radzikowski:1996}. Discussion of QEIs for other QFTs, including non-free models, may be found in~\cite{Fewster2017,KontouSanders:2020}; see~\cite{FroebCadamuro:2022} for a very recent development.

Although the lower bound in~\eqref{eq:genQEI} is explicit and rigorous, it is not easy to compute in closed form except in special cases. Those examples where explicit calculation is possible are therefore particularly valuable because they can provide insight into the nature of QEI bounds in general.
To the best of our knowledge this has only been achieved when $T$ exhibits translational invariance $T(s+r,s'+r)=T(s,s')$ which occurs, for instance, when $(M,g)$ is a stationary spacetime, $\gamma$ is a timelike Killing orbit and $\omega_0$ is stationary. Translational invariance allows us to write, with an abuse of notation, $T(s,s')=T(s-s')$, from which one easily finds that the QEI~\eqref{eq:genQEI} simplifies to
\begin{equation}\label{eq:genQEI2}
    \int_{-\infty}^\infty \diff{s}   |g(s)|^2\langle\mathbf{:}T_{\mu\nu} \dot{\gamma}^\mu\dot{\gamma}^\nu\mathbf{:}\rangle_\omega(\gamma(s))\ge
    -\int_{-\infty}^\infty \diff{\alpha} |\hat{g}(\alpha)|^2 Q(\alpha), 
\end{equation}
where
\begin{equation} \label{eq:Qboundinitial}
    Q(\alpha) = \dfrac{1}{2\pi^2}\int_{-\infty}^\alpha \diff{u} \ \hat{T}(u);
\end{equation}
the QEI~\eqref{eq:genQEI2} is also valid for complex-valued $g$. Taking the massless free field as an example, averaging along an inertial worldline in Minkowski space and using the Minkowski vacuum as the reference state $\omega_0$, this results in $Q(\alpha)=\alpha^4\Theta(\alpha)/(16\pi^3)$. Using the evenness of $|\hat{g}|^2$ together with Parseval's theorem then yields
\begin{equation}\label{eq:genQEI3}
    \int_{-\infty}^\infty \diff{s}   |g(s)|^2\langle\mathbf{:}T_{\mu\nu} \dot{\gamma}^\mu\dot{\gamma}^\nu\mathbf{:}\rangle_\omega(\gamma(s))\ge
    -\frac{1}{16\pi^2}\int_{-\infty}^\infty \diff{s} |g''(s)|^2 .
\end{equation}
Similar expressions are known for massive fields and in Minkowski spacetime of general dimension~\cite{BoundNegEnDenFewsterEveson1998}; for some curved spacetime examples see~\cite{FewsterTeo:1999}. Another explicit example arises where $\gamma$ is a uniformly linearly accelerated worldline in four-dimensional Minkowski spacetime with proper acceleration $a$, in which case the QEI~\eqref{eq:genQEI2} becomes~\cite{QEI2006FewsterPfenning} 
\begin{equation} \label{eq:QEIlinacc}
    \int_{-\infty}^\infty \diff{s} |g(s)|^2\langle \mathbf{:}T_{\mu\nu} \dot{\gamma}^\mu\dot{\gamma}^\nu\mathbf{:}\rangle_\omega(\gamma(s)) \geq -\dfrac{1}{16\pi^2} \int_{-\infty}^\infty \diff{s} \left(\lvert g''(s) \rvert^2 + 2a^2\lvert g'(s) \rvert^2 + \dfrac{11}{30}a^4 \lvert g(s) \rvert^2\right),
\end{equation}
and is again valid for all Hadamard states $\omega$ and complex-valued test functions $g$.

Using such expressions the scaling behaviour of the bound is easily understood and phenomena such as `quantum interest' may be explored~\cite{FordRoman:1999,FewsterTeo:2000,Fewster2017}. For example,
let $g_\lambda(s) =\lambda^{-1/2}g(s/\lambda)$, where $g$ is normalised so that $\int_{-\infty}^\infty \diff{s} |g(s)|^2=1$. Then~\eqref{eq:QEIlinacc} implies  
\begin{equation}\label{eq:AWECacc}
    \liminf_{\lambda\xrightarrow{}\infty}\int_{-\infty}^\infty \diff{s} |g_\lambda(s)|^2\langle\mathbf{:}T_{\mu\nu} \dot{\gamma}^\mu\dot{\gamma}^\nu\mathbf{:} \rangle_\omega(\gamma(s)) \geq -\dfrac{11a^4}{480\pi^2},
\end{equation}
reducing to the 
Averaged Weak Energy Condition (AWEC)  
\begin{equation}\label{eq:AWEC}
    \liminf_{\lambda\xrightarrow{}\infty}\int_{-\infty}^\infty \diff{\tau} |g_\lambda(s)|^2\langle\mathbf{:}T_{\mu\nu} \dot{\gamma}^\mu\dot{\gamma}^\nu\mathbf{:} \rangle_\omega(\gamma(s)) \geq 0
\end{equation}
in the limit $a\to 0$, which can also be obtained directly from~\eqref{eq:genQEI3}. An interesting observation is that the lower bound in~\eqref{eq:AWECacc} is exactly the constant energy density of the Rindler vacuum state along the accelerated worldline, while the lower bound in~\eqref{eq:AWEC} is the energy density of the Minkowski vacuum.  

As closed form expressions for QEI bounds are relatively few in number, it is of interest to find others. The purpose of this paper is to present a calculation of the QEI bound for a massless scalar field along any stationary worldline in $4$-dimensional Minkowski spacetime. By a stationary worldline, we mean
any timelike curve $\gamma(s)$, parameterised by proper time $s$, whose velocity vector evolves under a $1$-parameter subgroup of the Lorentz group: $\dot{\gamma}(s) = \exp (sM) \dot{\gamma}(0)$ for some fixed $M\in\mathfrak{so}(1,3)$ and future-pointing unit timelike $\dot{\gamma}(0)$.  The family of stationary worldlines includes physically important examples including inertial, uniformly linearly accelerated and uniformly rotating worldlines. We briefly recall their significance.

Stationary worldlines have a long history in relativity and quantum field theory.
Kottler~\cite{Kottler1912}, Synge~\cite{Synge1967} and Letaw~\cite{Letaw:1981} (see also~\cite{LetawPfautsch:1982}) all obtained them as the solutions to four-dimensional Frenet-Serret equations subject to constancy of the curvature invariants of the worldline; the name `stationary worldlines' is due to Letaw. 
The three curvature invariants are the curvature, which measures the proper acceleration, and the torsion and hypertorsion, which specify its proper angular velocity. More details are given in Section~\ref{sec:stationary}.
Stationary worldlines are equivalently described as the orbits of timelike Killing vector fields in Minkowski spacetime. There are also overlaps with the theory of rigid motions in special relativity that goes back to Born~\cite{Born:1909} and Herglotz~\cite{Herglotz1910}; in particular, any \emph{rotational} rigid motion is the flow of a timelike Killing vector by the Herglotz--Noether theorem, although the same theorem allows any $C^2$ timelike curve to be a flow line of an \emph{irrotational} rigid motion. See~\cite{Giulini:2010} for discussion and references. 
By a Poincar\'e transformation, any stationary worldline can be reduced to one of six prototypes: the
inertial, uniformly linearly accelerated, and uniformly rotating worldlines have already been mentioned, while the three remaining ones have
spatial projections corresponding to a semicubical parabola, a catenary or a helix. We will give more detail as we discuss each case separately later on.

Turning to quantum field theory, the uniformly linearly accelerated trajectory is of course at the heart of the extensive literature on the Unruh effect~\cite{Unruh:1976} describing the thermal excitation of a detector travelling along the accelerated trajectory~\cite{Unruh:1976,DeBievreMerkli:2006}. Coordinates adapted to this trajectory naturally cover the Rindler wedge, to which the Minkowski vacuum state restricts as a thermal state at the Unruh temperature relative to time translation along the accelerated trajectory. This has attracted the attention of other theorists to understand what excitations may be expected for detectors following other stationary worldlines~\cite{Letaw:1981,GoodJuarezAubryetal:2020} (and references therein) and whether there are ground states associated to coordinates based on these worldlines differing from restrictions of the Minkowski vacuum~\cite{LetawPfautsch:1981}. Proposals to observe the Unruh effect, on the other hand, 
focus on circular trajectories because they are confined to a finite volume~\cite{BellLeinaas:1983,Gooding_etal_PRL:2020}. In summary, there is good motivation to deepen the understanding of physical effects seen along stationary worldlines.

The main result of this paper is that the QEI~\eqref{eq:genQEI2} along any stationary worldline in Minkowski spacetime may be given explicitly as
\begin{equation} \label{eq:QEIfinalintro}
    \int \diff{s} |g(s)|^2\langle\mathbf{:}T_{\mu\nu} \dot{\gamma}^\mu\dot{\gamma}^\nu\mathbf{:} \rangle_\omega(\gamma(s)) \geq -\dfrac{1}{16\pi^2}\int_{-\infty}^\infty \diff{s} \left(\lvert g''(s) \rvert^2 + 2A\lvert g'(s) \rvert^2 + B\lvert g(s) \rvert^2\right),
\end{equation} 
where $A$ and $B$ are expressed in terms of the curvature $\kappa$, torsion $\tau$, and hypertorsion $\upsilon$ as 
\begin{equation} \label{eq:curvatureinvA}
    A = \kappa^2 + \tau^2 + \upsilon^2
\end{equation}
and 
\begin{equation} \label{eq:curvatureinvB}
B =\frac{1}{90}\left( 3\kappa^4 + 62\kappa^2\tau^2+30(\kappa^2 + \tau^2 + \upsilon^2)^2\right),
\end{equation}
and the inequality~\eqref{eq:QEIfinalintro} holds for all Hadamard states $\omega$ 
and all smooth compactly supported test functions $g$. 
Because general stationary worldlines describe a richer range of behaviour than that of constant linear acceleration, the above formulae provide a more refined understanding than that given by~\eqref{eq:QEIlinacc} of the way QEI bounds along a general worldline are sensitive to its curvature invariants.

To interpret the QEI~\eqref{eq:QEIfinalintro}, it is useful to consider
its scaling behaviour. As before, we take a test function $g_\lambda$ which is just a scaled version of the test function $g$, namely $g_\lambda(s) = \lambda^{-1/2}g(s/\lambda)$, so the support width of $g_\lambda$ is proportional to $\lambda$. Observing that $g_\lambda^{(k)}(s) = \lambda^{-k-1/2} g^{(k)}(s/\lambda)$, we find 
\begin{equation} \label{eq:scaling} 
    \int \diff{s} |g_\lambda(s)|^2\langle\mathbf{:}T_{\mu\nu} \dot{\gamma}^\mu\dot{\gamma}^\nu\mathbf{:} \rangle_\omega(\gamma(s)) \geq -\dfrac{\|g''\|^2}{16\pi^2\lambda^4} - \dfrac{A\|g'\|^2}{8\pi^2\lambda^2} - T_\text{reg}(0)\|g\|^2,
\end{equation}
where the norms are those of $L^2(\mathbb{R})$, i.e., $\|g\|^2 = \int_{-\infty}^\infty \diff{s} |g(s)|^2$. Here we have written $T_{\text{reg}}(0) = B/(16\pi^2)$ for reasons that will become clear later -- see, for example, equation \eqref{eq:TsplitIntro} and the arguments presented in Section \ref{sec:genQEI}. For sampling times shorter than the curvature scales, i.e., $\lambda\ll \min\{\kappa^{-1},\tau^{-1},\upsilon^{-1}\}$, the leading term dominates, reflecting the fact that any worldline is approximately inertial on short enough timescales. At intermediate and long timescales relative to curvature scales, the bound will receive corrections from, and eventually be dominated by, the last two terms in~\eqref{eq:scaling}, showing that the QEI is sensitive to the curvature invariants of the worldline $\gamma$. In the limit $\lambda\to+\infty$, and with $g$ normalised so that $\|g\|=1$, we obtain the remarkably simple formula
\begin{equation}\label{eq:longtime}
 \liminf_{\lambda\xrightarrow{}\infty}\int_{-\infty}^\infty \diff{s} |g_\lambda(s)|^2\langle\mathbf{:}T_{\mu\nu} \dot{\gamma}^\mu\dot{\gamma}^\nu\mathbf{:} \rangle_\omega(\gamma(s)) \geq -T_\text{reg}(0),
\end{equation} 
which bounds the average energy density along the entire trajectory.
In particular, the QEI is consistent with the existence of a constant renormalised energy density $-T_\text{reg}(0)$ along $\gamma$, and this is the most negative value that any constant energy density could take.  An intriguing question is whether or not this value is attained by some Hadamard state, or a sequence of Hadamard states in a limiting sense, which we will address in Section~\ref{sec:discussion}. 

The derivation of~\eqref{eq:QEIfinalintro} requires a number of innovations. 
Although the point-split energy density can be obtained easily enough for any given stationary worldline, its Fourier transform does not have a closed form -- as far as we know -- for three of the six prototypes. In Section \ref{sec:genQEI}, we develop a new method for computing the QEI bound for massless fields in four-dimensions that avoids the use of Fourier transforms. The result is that the QEI will take the form~\eqref{eq:QEIfinalintro} provided that the point-split energy density takes the form
\begin{equation} \label{eq:TsplitIntro}
    T(s,s')= \lim_{\epsilon\to 0+}\left( \frac{3}{2\pi^2(s-s'-i\epsilon)^4} -\frac{A}{4\pi^2 (s-s'-i\epsilon)^2}\right) + T_{\text{reg}}(s-s'),
\end{equation} 
where the regular part $T_{\text{reg}}$ must satisfy various conditions, whereupon the coefficient $B$ is given by $B=16\pi^2T_{\text{reg}}(0)$ as before. In Section~\ref{sec:method}, we apply these ideas to stationary worldlines, resulting in formulae for the point-split energy density in terms of functions easily computed from the Lorentzian distance between two points on the curve and a tetrad that is adapted to it, in a manner we describe. Most of the required conditions on $T_{\text{reg}}$ follow directly from this analysis,
and the values $A$ and $B$ are identified in terms of Taylor coefficients of these functions. Appendix~\ref{sec:details} gives more detail on our methods, while in Appendix~\ref{sec:Taylor} the relevant Taylor coefficients are evaluated in terms of curvature invariants thus establishing~\eqref{eq:curvatureinvA} and~\eqref{eq:curvatureinvB}. In Section~\ref{sec:prototypes}, we work through each prototype in turn, providing explicit formulae for the point-split energy density that allow the 
remaining technical condition to be verified, and also as a check on our Taylor series calculations. In three cases, (inertial worldlines, linearly accelerated worldlines and the semicubical parabola), a closed form may be found for $\hat{T}$, and we can also check
our calculations by using~\eqref{eq:genQEI2} and~\eqref{eq:Qboundinitial}. Finally, 
in Section~\ref{sec:discussion}, we discuss the physical significance of our results and some open problems. Two further appendices contain additional computations: Appendix~\ref{sec:wicksqanalysis} computes a quantum inequality for the Wick square along stationary worldlines following the same general method of the main text, while Appendix~\ref{sec:Rindler} records the calculation of the minimally coupled stress-energy tensor in the Rindler vacuum and Rindler thermal states, which is needed for our discussion. 
 
\section{Stationary worldlines}\label{sec:stationary}

Throughout this paper we work on $4$-dimensional Minkowski spacetime, with metric $\eta=\diff{t}^2-\diff{x}^2-\diff{y}^2-\diff{z}^2$, and we employ the inertial coordinates $(t,x,y,z)$ except where otherwise specified. 
A stationary worldline is
any smooth curve $\gamma:\mathbb{R}\to \mathbb{R}^4$, whose velocity vector $\dot{\gamma}$ is a future-pointing unit timelike vector evolving under a $1$-parameter subgroup of the Lorentz group $SO(1,3)$, i.e., 
\begin{equation}
    \dot{\gamma}^\mu(s) = \exp(sM)^\mu_{\phantom{\mu}\nu}\dot{\gamma}^\nu(0),
\end{equation}
where $M$ is any fixed element of $\mathfrak{so}(1,3)$ (which requires
precisely that $M_{\mu\nu}$ is antisymmetric). As every component of $\exp(sM)$ is analytic in $s$, it follows that the Cartesian components of $\dot{\gamma}(s)$ and, integrating, the Cartesian coordinates of $\gamma(s)$, are also $s$-analytic. An equivalent definition of a stationary worldline is that $\gamma$ is an orbit of a future-pointing timelike Killing vector field 
\begin{equation}
    \xi^\mu(x) = M^\mu_{\phantom{\mu}\nu} (x^\nu- \gamma(0)^\nu)+ \dot{\gamma}^\mu(0),
\end{equation}
which is necessarily timelike in a neighbourhood of $\gamma$ and a future-pointing unit vector on $\gamma$.

Finally, stationary worldlines can also be described as the solutions to the Frenet-Serret equations with constant curvatures~\cite{Kottler1912,Synge1967,Letaw:1981}. 
Here, the curvature invariants of a general timelike curve $\gamma(s)$, parameterised by proper time, are defined as follows. Suppose a right-handed tetrad $e_a^\mu$ has been chosen along $\gamma$ so that 
\begin{equation}\label{eq:adapted}
\gamma^{(k+1)}(s)\in \Span\{e_0(s),\ldots,e_k(s)\} \qquad (0\le k\le 3),\qquad \text{and}\qquad
\dot{\gamma}(s)=e_0(s),
\end{equation}
in which case we say that $e_a^\mu$ is \emph{adapted} to $\gamma$. If the tetrad also satisfies 
\begin{equation}\label{eq:FStetrad} 
e_1(s)^\mu \ddot{\gamma}(s)_\mu\le 0,\qquad e_2(s)^\mu \dddot{\gamma}(s)_\mu\le 0,
\end{equation}
then it will be called a Frenet--Serret tetrad.  If 
the tetrad is defined by $e_a(s)=\exp(sM)e_a(0)$, then it is adapted (respectively, Frenet--Serret) if and only if~\eqref{eq:adapted} holds at $s=0$ (resp.,~\eqref{eq:adapted} and~\eqref{eq:FStetrad} hold at $s=0$). Explicit formulae resulting from a Gram--Schmidt procedure are given in~\cite{Letaw:1981}.
Expanding the derivatives of the tetrad vectors in terms of the tetrad, one obtains the generalized Frenet--Serret equations 
\begin{equation}\label{eq:FS}
    \dot{e}_a^\mu = K_a^{\phantom{a}b} e_b^\mu,
\end{equation}
where $K_{ab}$ is antisymmetric and tridiagonal (due to~\eqref{eq:adapted}). Thus it takes the form
\begin{equation}
    K_{\bullet\bullet}(s) = \begin{pmatrix} 
    0 & -\kappa(s) & 0 & 0 \\
    \kappa(s) & 0 & -\tau(s) & 0 \\
    0  & \tau(s) & 0 & -\upsilon(s) \\
    0 & 0 & \upsilon(s) & 0
    \end{pmatrix},
\end{equation}
which defines the curvature $\kappa$, torsion $\tau$ and hypertorsion $\upsilon$. Here, and elsewhere in this paper, bullets are used to indicate tensorial type, 
when displaying tensorial components in vector or matrix form. Explicitly, one has
\begin{equation}\label{eq:curveinvs}
    \kappa = e_{0\mu}\dot{e}_1^\mu = -e_{1\mu}\dot{e}_0^\mu, \qquad \tau = e_{1\mu}\dot{e}_2^\mu=-e_{2\mu}\dot{e}_1^\mu, \qquad \upsilon= e_{2\mu}\dot{e}_3^\mu=-e_{3\mu}\dot{e}_2^\mu.
\end{equation}
The choices made when specifying the Frenet--Serret tetrad ensure that $\kappa$ and $\tau$ are nonnegative, while $\upsilon$ can take any real value.

As the curvature invariants are constant along stationary worldlines, it is easy to compute higher derivatives of the tetrad, 
\begin{equation}\label{eq:tetradderivatives}
\frac{\diff^k}{\diff s^k}e_a^\mu = (K^k)_a^{\phantom{a}b} e_b^\mu, \qquad
(K^k)_a^{\phantom{a}b}= K_a^{\phantom{a}c_1}K_{c_1}^{\phantom{c_1}c_2}\cdots K_{c_{k-1}}{}^{b}.
\end{equation}
For example, the first three derivatives of the velocity $u=\dot{\gamma}$ may be computed as 
\begin{equation}\label{eq:e0_derivs}
     \dot{u}^\mu = \dot{e}_0^\mu = \kappa e_1^\mu,\qquad
     \ddot{u}^\mu = \kappa^2 e_0^\mu + \kappa\tau e_2^\mu,
     \qquad
     \dddot{u}^\mu = \kappa(\kappa^2-\tau^2)e_1^\mu + \kappa\tau\upsilon e_3^\mu.
\end{equation}

It is also possible to give a general formula for $\gamma(s)$ in terms of
$M$, $\gamma(0)$ and $\dot{\gamma}(0)$. As $M^\bullet_{\phantom{\bullet}\bullet}$ is antisymmetric with respect to $\eta$, there is a unique decomposition
\begin{equation}
    \dot{\gamma}(0)^\mu = M^\mu_{\phantom{\mu}\nu} v^\nu + k^\mu,
\end{equation}
where $M^\mu_{\phantom{\mu}\nu} k^\nu=0$. One then has
\begin{equation}
 \gamma(s)^\mu = \exp(sM)^\mu_{\phantom{\mu}\nu}v^\nu + s k^\mu + \gamma(0)^\mu - v^\mu.
\end{equation}

Any stationary worldline $\gamma$ may be related to one of six basic types by a proper orthochronous Poincar\'e transformation. Note that $\gamma(s)$ is determined by the initial position, $\gamma(0)\in \mathbb{R}^4$, the initial four-velocity $\dot{\gamma}(0)$ and the 
element $M\in\mathfrak{so}(1,3)$ that fixes the evolution $\dot{\gamma}(s) = \exp(sM)\dot{\gamma}(0)$. Under a Poincar\'e transformation $x\mapsto \Lambda x+w$, $\gamma$ is mapped to $\tilde{\gamma}(s) =\Lambda \gamma(s) + w$, whose velocity
evolves according to the $1$-parameter Lorentz subgroup $\exp(s\Lambda M\Lambda^{-1})$ and which is therefore also a stationary worldline. As the Lorentz transformation maps a Frenet--Serret tetrad for $\gamma$ to a Frenet--Serret tetrad for $\tilde{\gamma}$, it follows from~\eqref{eq:curveinvs} that the curvature invariants of $\tilde{\gamma}$ are identical to those of $\gamma$. Using the classification of conjugacy classes in $\mathfrak{so}(1,3)$~\cite{Shaw:1970}, we may choose $\Lambda$ in such a way that
$\tilde{M}=\Lambda M\Lambda^{-1}$ is one of five possible types:
(a) the zero element, generating the trivial subgroup of $SO(1,3)$, (b) a generator of boosts in the $tx$-plane, corresponding to a hyperbolic subgroup of $SO(1,3)$, (c) a generator of rotations in the $yz$-plane, corresponding to an elliptic subgroup of $SO(1,3)$, (d) a generator of a null rotation that fixes the null vector $\partial_t+\partial_x$ but acts nontrivially on all other null vectors, corresponding to a parabolic subgroup of $SO(1,3)$; (e) the sum of a generator of boosts in the $tx$-plane and a generator of rotations in the $yz$ plane, corresponding to a loxodromic subgroup of $SO(1,3)$. In each case, Lorentz transformations that commute with the $1$-parameter subgroup in question can be used to arrange that $\dot{\tilde{\gamma}}(0)$ takes a convenient form. 

Taking these possibilities in turn: in case (a), all Lorentz transformations commute with the trivial subgroup, so we may without loss assume that $\tilde{\gamma}(s)=(s,0,0,0)$. In case (b), the subgroup of boosts parallel to the $x$-axis commutes with itself and the subgroup of rotations in the $yz$-plane. Thus, we may arrange that $\dot{\tilde{\gamma}}(0) = \cosh\chi\partial_t + \sinh\chi \partial_y$ for some $\chi\in\mathbb{R}$,\footnote{We could even arrange that $\chi\ge 0$, but it is convenient not to insist on this.} leading to two subcases: $\chi=0$, in which case (after possible translation) 
\begin{equation}
\tilde{\gamma}(s)=(a^{-1}\sinh as, a^{-1}\cosh as,0,0)
\end{equation}
is a uniformly linearly accelerated worldline with $a\neq 0$, or $\chi\neq 0$, in which case (up to translations)
\begin{equation}
\tilde{\gamma}(s) = (a^{-1}\cosh\chi\,\sinh as,a^{-1}\cosh\chi\,\cosh as,-s\sinh\chi,0)
\end{equation}
is a catenary. The curvature invariants (in either subcase) are $\kappa =|a|\cosh\chi$ and $\tau=|a\sinh\chi|$, while the hypertorsion is $\upsilon=0$. For convenience, the curvature invariants for all six prototypes are tabulated in Table~\ref{tab:inv}, in agreement with~\cite{LetawPfautsch:1982}.

\begin{table}
    \centering
    \begin{tabular}{c|cccccc}
         & Inertial & Linear Acc. &  Catenary &  Parabolic & Elliptic  & Loxodromic  \\ 
         & $\kappa=\tau=\upsilon=0$ & $\kappa>0$ & $\kappa>\tau>0$ & $\kappa=\tau>0$ & $\tau>\kappa>0$ & $\kappa,\tau>0$\\
         & &  $\tau=\upsilon=0$ & $\upsilon=0$ & $\upsilon=0$ & $\upsilon=0$ & $\upsilon\neq 0$
         \\ \hline
    $\kappa$ & $0$      & $\lvert a \rvert$  & $\lvert a\rvert \cosh\chi$ &$\lvert a \rvert$     & $r\omega^2$ &  $\sqrt{C^2 a^2+V^2\omega^2}$ \\
    $\tau$   & $0$      & $0$  & $\lvert a \sinh\chi\rvert$   & $\lvert a\rvert$       & $\lvert\omega\rvert\sqrt{1+(r\omega)^2}$  & $(a^2+\omega^2)C|V|/\kappa$ \\
    $\upsilon$ & $0$    & $0$ & $0$ & $0$ & $0$ & $a \omega/\kappa $
    \end{tabular}
    \caption{Curvature invariants for the stationary worldlines.} 
    \label{tab:inv}
\end{table}

In case (c), the $1$-parameter parabolic subgroup takes the form
\begin{equation}\label{eq:P}
    P^\bullet_{\phantom{\bullet}\bullet}(s) = \begin{pmatrix}
    1+(as)^2/2 & -(as)^2/2 & 0 & as \\ 
    (as)^2/2 & 1-(as)^2/2 & 0 & as  \\
    0 & 0 & 1 & 0 \\
    as & -as & 0 & 1
    \end{pmatrix}=\exp \begin{pmatrix}
    0 & 0 & 0 & as \\ 
   0 & 0 & 0 & as  \\
    0 & 0 & 0 & 0 \\
    as & -as & 0 & 0
    \end{pmatrix}
\end{equation}
for some constant nonzero $a\in\mathbb{R}$, and commutes with Lorentz transformations of the form
\begin{equation}
    \Lambda^\bullet_{\phantom{\bullet}\bullet}=
    \begin{pmatrix}
    1+r^2/2 & -r^2/2 & r\cos\theta & r\sin\theta \\ 
    r^2/2 & 1-r^2/2 & r\cos\theta & r\sin\theta \\
    r\cos\theta & -r\cos\theta & 1 & 0 \\
    r\sin\theta & -r\sin\theta & 0 & 1
    \end{pmatrix}
\end{equation}
which can be used to bring the initial velocity into
the form $\dot{\tilde{\gamma}}(0) = \cosh\chi \partial_t + \sinh\chi\partial_x$ for some $\chi\in\mathbb{R}$. Conjugating $P^\bullet_{\phantom{\bullet}\bullet}(s)$ with a
boost in the $tx$-plane results in $P^\bullet_{\phantom{\bullet}\bullet}(\lambda s)$ for some $\lambda>0$; 
in other words effectively rescaling $a$. Therefore there is no loss
of generality in assuming that the initial $4$-velocity is $\dot{\tilde{\gamma}}(0)=\partial_t$, in which case the worldline (up to translation) is the  semicubical parabola, 
\begin{equation} \label{eq:semipara}
    \tilde{\gamma}(s) = \left(s+\dfrac{1}{6}a^2s^3, \dfrac{1}{6}a^2s^3, 0, \dfrac{1}{2}as^2\right).
\end{equation}

Next, the elliptic subgroup in case (d) commutes with boosts in the $tx$-plane and rotations in the $yz$-plane. Accordingly, we may
arrange the initial velocity to be $\dot{\tilde{\gamma}}(0) = \cosh\chi\partial_t + \sinh\chi \partial_z$ for some $\chi\in\mathbb{R}$; the special case $\chi=0$ corresponds to inertial motion and may be discarded. 
Up to a translation, this results in the uniformly rotating worldline
\begin{equation}
    \tilde{\gamma}^\bullet(s) = \left(s\cosh\chi, 0, r\cos{\omega s}, r\sin\omega s\right),
\end{equation}
where the radius $r>0$ and proper angular velocity $\omega\neq 0$ are related to the initial rapidity by $r\omega = \sinh\chi$. The proper acceleration is $\kappa = r\omega^2$, while the torsion is $\tau=\lvert\omega\rvert\sqrt{1+(r\omega)^2}$ and the hypertorsion vanishes. 

Lastly, in case (e), the loxodromic subgroup is generated by a linear
combination of a $tx$-boost generator and a $yz$-rotation generator.
As it commutes with $tx$-boosts and $yz$-rotations, we may assume without loss that the initial velocity is $\dot{\tilde{\gamma}}(0) = 
\cosh\chi\partial_t + \sinh\chi\partial_z$ for $\chi\in\mathbb{R}\setminus\{0\}$; the possibility $\chi=0$ corresponds to a hyperbolic worldline and is rejected. Up to a translation, this results in the worldline
\begin{equation} \label{eq:loxotraj}
    \gamma^\bullet(s) = (Ca^{-1}\sinh(a s),Ca^{-1}\cosh(a s),V\omega^{-1}\cos(\omega s),V\omega^{-1}\sin(\omega s)),
\end{equation} 
where $C=\cosh \chi$ and $V=\sinh\chi$, 
which undergoes both rotation in the $yz$-plane at constant proper angular velocity $\omega\neq 0$ and constant distance $|V/\omega|$ from the $x$-axis, while undergoing uniform acceleration in the $x$-direction controlled by $a\neq 0$ (the cases where one or both of $a$ or $\omega$ vanish are already covered under~(a), (b) and (d)). The curvature invariants for this worldline are
\begin{equation}\label{eq:loxoinv}
    \kappa = \sqrt{C^2a^2+V^2\omega^2}, \qquad
    \tau= (a^2+\omega^2)C|V|/\kappa, \qquad
    \upsilon=a \omega/\kappa .
\end{equation} 

\section{Reformulation of the QEI bound} \label{sec:genQEI} 

We study the massless minimally coupled scalar field in $4$-dimensional Minkowski spacetime, with field equation $\Box \phi=\eta^{\mu\nu}\nabla_\mu\nabla_\nu\phi=0$ and classical stress-energy tensor
\begin{equation}
T_{\mu\nu} = (\nabla_\mu\phi)\nabla_\nu\phi - 
\tfrac{1}{2}\eta_{\mu\nu}\eta^{\alpha\beta}
(\nabla_\alpha\phi)\nabla_\beta\phi.
\end{equation}
Consider an observer following a timelike curve $\gamma$, parameterised by proper time, with $4$-velocity $u^\mu=\dot{\gamma}^\mu$. This
observer sees energy density 
\begin{equation}\label{eq:classicalenergydensity}
    T_{\mu\nu}u^\mu u^\nu = \frac{1}{2}\sum_{a=0}^3 
(e_a^\mu \nabla_\mu\phi)^2,
\end{equation}
where $e_a^\mu$ ($0\le a\le 3$) is a tetrad defined around $\gamma$
with $e_0^\mu|_\gamma=u^\mu$.

In quantum field theory, the stress-energy tensor requires renormalisation. Let 
\begin{equation}
G(x,x')=\langle \phi(x)\phi(x')\rangle_\omega
\end{equation}
be the Wightman function of the field in a state $\omega$. The Wick square has expectation value
\begin{equation}
    \langle {:}\phi^2(x){:}\rangle_\omega = (G-G_0)(x,x) ,
\end{equation}
where 
\begin{equation}
G_0(x,x') = \lim_{\epsilon\to 0+} \frac{-1}{4\pi^2(
(t-t'-i\epsilon)^2-\|\xb-\xb'\|^2)}
\end{equation}
is the Wightman function of the Poincar\'e invariant vacuum $\omega_0$. This expression makes sense if (like $\omega_0$) $\omega$ is a \emph{Hadamard state}~\cite{KAY1991,Moretti:2021}, because the difference $G-G_0$ is then a smooth function.
Similarly, the renormalised stress-energy tensor has expectation value
\begin{equation}
    \langle {:}T_{\mu\nu}(x){:}\rangle_\omega = 
    D_{\mu\nu}(x) - \tfrac{1}{2}\eta_{\mu\nu}\eta^{\alpha\beta}
    D_{\alpha\beta}(x),
\end{equation}
where
\begin{equation}
    D_{\mu\nu}(x)=\coin{(\nabla\otimes\nabla)(G-G_0)}_{\mu\nu}(x)
\end{equation}
and the double square brackets denote a coincidence limit.

Although the classical energy density~\eqref{eq:classicalenergydensity} is everywhere nonnegative, the quantised energy density may assume negative expectation values. The QEIs provide lower bounds on averaged expectation values, for which a prototype is a lower bound on the following expression
\begin{equation} \label{eq:QEIorig}
    \int \diff{s} |g(s)|^2\langle\mathbf{:}(\mathcal{Q}\phi)^2\mathbf{:}\rangle_\omega(\gamma(s)),
\end{equation}
where $\mathcal{Q}$ is a partial differential operator with smooth real coefficients and $g \in \mathcal{C}_0^\infty(\mathbb{R})$ is a smooth
real-valued test function. In the case where $\mathcal{Q}$ is the identity,~\eqref{eq:QEIorig} is an averaged Wick square, while by considering a sum of similar terms for $\mathcal{Q}_a=2^{-1/2} e^\mu_a\nabla_\mu$ for $0\le a\le 3$, we can bound averages of the energy density along $\gamma$.  

A lower bound on~\eqref{eq:QEIorig} was established in~\cite{GenQEI2000CFewster} -- in fact the bound applies to general timelike curves in arbitrary globally hyperbolic spacetimes for massive as well as massless fields. In our case it asserts that
\begin{equation}\label{eq:urQEI}
    \int_{-\infty}^\infty \diff{s}   |g(s)|^2\langle\mathbf{:}(\mathcal{Q}\phi)^2\mathbf{:}\rangle_\omega(\gamma(s))\ge
    -\int_0^\infty \frac{\diff{\alpha}}{\pi} \yhwidehat{g\otimes g T}(-\alpha,\alpha) >-\infty
\end{equation}
holds for all real-valued compactly supported smooth test functions $g$, 
and all Hadamard states $\omega$, where 
\begin{equation}
    T(s,s') = \langle \mathcal{Q}\phi(\gamma(s))\mathcal{Q}\phi(\gamma(s'))\rangle_{\omega_0}=((\mathcal{Q}\otimes \mathcal{Q})G_0)(\gamma(s),\gamma(s')).
\end{equation}
Here, the vacuum two-point function enters because normal ordering is performed relative to the vacuum; the general results of~\cite{GenQEI2000CFewster} also allow for any Hadamard state to be used as the reference state for this purpose. At a more formal level, 
$T$ is the pull-back of the distribution $(\mathcal{Q}\otimes \mathcal{Q})G_0$ by the map $(s,s')\mapsto (\gamma(s),\gamma(s'))$, and its existence is owed to the special properties of the Hadamard condition and the fact that $\gamma$ is timelike -- see~\cite{GenQEI2000CFewster} for full details and rigorous proofs. 

As already mentioned, a QEI for the energy density involves a sum of such bounds, leading to~\eqref{eq:genQEI} with 
\begin{equation}
    T(s,s') = \frac{1}{2}\sum_{a=0}^3 ((\nabla_{e_a}\otimes\nabla_{e_a})G_0)(\gamma(s),\gamma(s')).
\end{equation} 
While it is usually not hard to obtain the distribution $T$ for a given timelike curve in Minkowski spacetime, assuming that $G_0$ is given, it is not usually possible to find the Fourier transform required to compute the QEI bound~\eqref{eq:urQEI} in closed form. 

The situation is somewhat simplified if $T(s,s')$ is translationally invariant, in which case one has
the bound given by~\eqref{eq:genQEI2} and~\eqref{eq:Qboundinitial}. This can be taken a little further, on observing that $|\hat{g}(\alpha)|^2$ is even, so only the even part $Q_\text{even}(\alpha)=\tfrac{1}{2}(Q(\alpha)+Q(-\alpha))$ of $Q$ contributes to~\eqref{eq:genQEI2}, resulting in the bound 
\begin{equation}\label{eq:QEIQeven}
    \int \diff{s} |g(s)|^2\langle\mathbf{:}(\mathcal{Q}\phi)^2\mathbf{:}\rangle_\omega(\gamma(s)) \geq - \int_{-\infty}^\infty \diff{\alpha} \lvert \hat{g}(\alpha) \rvert^2 Q_{\text{even}}(\alpha),
\end{equation}
which is the final form of our prototypical quantum inequality.

A convenient expression for $Q_\text{even}$ may be found by manipulating equation \eqref{eq:Qboundinitial} in the following way:
\begin{align}\label{eq:Qeven}
    Q_{\text{even}}(\alpha) 
    &= \dfrac{1}{4\pi^2}\left[\int_{-\infty}^\alpha \hat{T}(u)\diff{u}+\int_{-\infty}^{-\alpha} \hat{T}(u)\diff{u}\right] \nonumber \\
    &=  \dfrac{1}{4\pi^2}\left[2\int_{-\infty}^0 \hat{T}(u)\diff{u}+\int_{0}^\alpha \hat{T}(u)\diff{u} -\int_0^{\alpha} \hat{T}(-u)\diff{u}\right] \nonumber \\
    &= \dfrac{1}{2\pi^2}\left[\int_{-\infty}^0 \hat{T}(u)\diff{u}+\int_{0}^\alpha \hat{T}_{\text{odd}}(u)\diff{u}\right],
\end{align}
where $\hat{T}_{\text{odd}}(u) = \tfrac{1}{2}(\hat{T}(u)-\hat{T}(-u))$. In the above calculation, $\hat{T}$ is assumed to be continuous, as is the case for the examples we will study.

Evaluating $Q_{\text{even}}$ from~\eqref{eq:Qeven} requires several steps.  Computing $T$ is a tedious but straightforward calculation best handled using computer algebra. In the simplest cases, the transform may be evaluated in closed form, which (as will be seen later) is the case for the inertial, uniformly accelerated and semicubical parabola worldlines, but is not possible (to our knowledge) in the case of the other stationary worldlines. However, this obstacle can be circumvented, as we now describe.

Using the Minkowski vacuum as the reference state, we will show in Section~\ref{sec:method} that the point-split energy density along a stationary worldline may be written in the form
\begin{equation}\label{eq:Tsingreg}
    T(s,s') = T_{\text{sing}}(s-s') + T_{\text{reg}}(s-s'),
\end{equation}
where $T_{\text{sing}}$ is given by the distributional limit
\begin{equation}\label{eq:Tsing}
    T_{\text{sing}}(s) =\lim_{\epsilon\to 0+}\left( \frac{3}{2\pi^2(s-i\epsilon)^4} -\frac{A}{4\pi^2 (s-i\epsilon)^2}\right)
\end{equation}
for some constant $A$ (the sign is chosen for later convenience) and $T_{\text{reg}}$ is smooth, real and even, and decaying as $\mathcal{O}(s^{-2})$ as $|s|\to\infty$.  In particular, $T_{\text{reg}}$ is absolutely integrable and has a well-defined Fourier transform that is continuous, real and even. Therefore it does not contribute to $\hat{T}_{\text{odd}}$. Turning to $T_{\text{sing}}$, its leading singularity is universal, essentially because all stationary worldlines resemble inertial worldlines on sufficiently short timescales. The specific coefficient is fixed by the Hadamard form and the definition of the energy density along the curve. Meanwhile the coefficient $A$ carries information about the specific curve at hand. The Fourier transform of $T_{\text{sing}}$, in our convention, is
\begin{equation}
    \hat{T}_{\text{sing}}(u) = \dfrac{1}{2\pi} (u^3+Au) \Theta(u),
\end{equation}
where $\Theta$ is the Heaviside distribution.
Evidently $T_{\text{sing}}$ does not contribute to the first term in~\eqref{eq:Qeven}, while the odd part of $\hat{T}$ is
\begin{equation}
    \hat{T}_{\text{odd}}(u) = \dfrac{1}{4\pi} (u^3+Au) ,
\end{equation}
recalling that $\hat{T}_{\text{reg}}$ is even. We now have $Q_{\text{even}}$ in the form
\begin{align}\label{eq:Qevenagain}
    Q_{\text{even}}(\alpha) &= \dfrac{1}{2\pi^2} \left[ \int_{-\infty}^0 \diff{u} \,\hat{T}_{\text{reg}}(u)  + \dfrac{1}{4\pi}\int_0^\alpha  \diff{u} (u^3+Au) \right] \notag \\
    &= \dfrac{1}{32\pi^3}(\alpha^4 + 2 A \alpha^2) + \dfrac{T_{\text{reg}}(0)}{2\pi},
\end{align} 
where we have again used the evenness of $\hat{T}_{\text{reg}}$ and the Fourier inversion formula.
Inserting~\eqref{eq:Qevenagain} into~\eqref{eq:QEIQeven} and using Parseval's theorem gives the QEI bound
\begin{equation} \label{eq:QEIfinalgeneral}
    \int \diff{s} |g(s)|^2\langle\mathbf{:}T_{\mu\nu} \dot{\gamma}^\mu\dot{\gamma}^\nu\mathbf{:} \rangle_\omega(\gamma(s)) \geq -\dfrac{1}{16\pi^2}\int_{-\infty}^\infty \diff{s} \left(\lvert g''(s) \rvert^2 + 2A \lvert g'(s) \rvert^2 +  B \lvert g(s) \rvert^2\right),
\end{equation}
where $B=16\pi^2T_{\text{reg}}(0)$.

The upshot of this analysis is a direct route to the QEI once the point-split expression $T$ is obtained; all that is needed is to isolate the appropriate values of $A$ and $T_{\text{reg}}(0)$, avoiding the need to compute $\hat{T}$ explicitly. This apparent royal road is made possible because of the special structure of the Minkowski vacuum two-point function for the massless scalar field in four dimensions -- closely related to Huygens' principle. A similar analysis for a QI on the Wick square can be found in Appendix \ref{sec:wicksqanalysis}.

\section{Computation of the point-split energy density}\label{sec:method}

In this section we establish that the point-split energy density
along stationary worldlines obeys equations~\eqref{eq:Tsingreg} 
and~\eqref{eq:Tsing}, and also that $T_{\text{sing}}$ and $T_{\text{reg}}$ have the properties mentioned above, with one exception that will be treated by examining the six prototypical cases in Section~\ref{sec:prototypes}.

Let $\gamma$ be any stationary worldline with $\dot{\gamma}(s) = \exp(sM)\dot{\gamma}(0)$ and $\dot{\gamma}(0)$ a future-pointing unit timelike vector.
Suppose that 
\begin{equation}\label{eq:eadef}
    e_a(s) = \exp (sM) e_a(0) \qquad (0\le a\le 3)
\end{equation}  
is an adapted frame on $\gamma$ satisfying~\eqref{eq:adapted}. In general there may be many possible adapted tetrads of this type. However, if $\tilde{e}_a(s)$ is any other then it is related to $e_a(s)$ by a rigid rotation, i.e., $\tilde{e}_0(s)=e_0(0)$ and $\tilde{e}_i(0)=R_i^{\phantom{i}j}e_j(0)$ (summing $j$ over $1,2,3$), where $\delta^{im}R_i^{\phantom{i}j}R_m^{\phantom{m}n}=\delta^{mn}$, $\det R=1$. This must be true for some $R$ at $s=0$, and extends to all $s$ as both tetrads evolve under $\exp(sM)$.

Next, recall that the vacuum $2$-point function may be given as a distributional limit
\begin{equation} \label{eq:2pointfndistributionallimit}
    G_0(x,x') = \lim_{\epsilon\to 0+} F(\sigma_\epsilon(x,x'))
\end{equation}
where $F(z) = 1/(4\pi^2 z)$ and 
\begin{equation}
    \sigma_\epsilon(x,x') = -\eta_{\mu\nu} (x-x'-i\epsilon \partial_t)^\mu
    (x-x'-i\epsilon \partial_t)^\nu
\end{equation} 
is the regulated signed squared geodesic separation of $x$ and $x'$. As usual, we have identified Minkowski spacetime with its tangent spaces at all points.

Distributional derivatives may be taken under the limit in~\eqref{eq:2pointfndistributionallimit}, giving
\begin{equation}
   \tfrac{1}{2} (\nabla_\mu\otimes 1)G_0(x,x') = -\lim_{\epsilon\to 0+} F'(\sigma_\epsilon(x,x')) (x-x'-i\epsilon \partial_t)_\mu
\end{equation}
and
\begin{equation}
    \tfrac{1}{2}(\nabla_\mu\otimes \nabla_\nu)G_0(x,x') = \lim_{\epsilon\to 0+}
    \left(F'(\sigma_\epsilon(x,x')) \eta_{\mu\nu} - 2F''(\sigma_\epsilon(x,x'))
    (x-x'-i\epsilon \partial_t)_\mu (x-x'-i\epsilon \partial_t)_\nu\right).
\end{equation}  
Contracting with $e_a(x)^\mu e_a(x')^\nu$ (without summing on $a$) and pulling back to the worldline, we find
\begin{align}\label{eq:diff2G0}
   \tfrac{1}{2} ((\nabla_{e_a}\otimes  \nabla_{e_a})G_0) (\gamma(s),\gamma(s')) &= 
    \lim_{\epsilon\to 0+}
    F'(\sigma_\epsilon(\gamma(s),\gamma(s'))) C_a(s,s')   \notag\\
    &\qquad\qquad 
     +\lim_{\epsilon\to 0+} 2 F''(\sigma_\epsilon(\gamma(s),\gamma(s')))
    D_a(s,s') D_a(s',s) .
\end{align}
(note the order of variables in the last two factors in the second term) where 
\begin{equation} \label{eq:CandDdefn}
    C_a(s,s') = \eta_{\mu\nu}e_a^\mu(s)e_a^\nu(s') , \qquad D_{a}(s,s') = 
    (\gamma(s)-\gamma(s'))_\mu e_a^\mu(s).
\end{equation}
Under a change of frame from $e_a$ to $\tilde{e}_a$ as described above, one has $\tilde{C}_0=C_0$, $\tilde{D}_0=D_0$, while $\tilde{D}_i=R_i^{\phantom{i}j} D_j$ and 
$\tilde{C}_i(s,s')=R_i^{\phantom{i}j}R_i^{\phantom{i}k}\eta_{\mu\nu}e_j^\mu(s)e_k^\nu(s')$. By
orthogonality, this implies that $\sum_{i=1}^3 \tilde{C}_i(s,s')=\sum_{i=1}^3 C_i(s,s')$
and $\sum_{i=1}^3 \tilde{D}_a(s,s')\tilde{D}_a(s',s)=\sum_{i=1}^3 D_a(s,s')D_a(s',s)$.

In Appendix~\ref{sec:details}, we give some further details to justify the above distributional manipulations
and prove the following result, where $\kappa$, $\tau$ and $\upsilon$ are the curvature invariants of $\gamma$ as described in Section~\ref{sec:stationary}. 
\begin{lemma}
(a) With the choice of tetrad just described, $C_a(s,s')$ and $D_a(s,s')$ are translationally invariant,
depending only on $s-s'$. There are entire analytic functions $G_a$ and $H_a$ such that
\begin{equation} \label{eq:CtoGandDtoH}
C_a(s,s')=G_a(\kappa^2(s-s')^2), \qquad D_a(s,s')D_a(s',s) = -(s-s')^2 H_a(\kappa^2(s-s')^2),
\end{equation}
where, in the limit $z\to 0$,
\begin{equation}\label{eq:sumGa}
    \sum_{a=0}^3 G_a(z)  = -2 + \frac{\tau^2+\upsilon^2}{\kappa^2}z + 
\frac{(\kappa\tau)^2- (\tau^2+\upsilon^2)^2}{\kappa^4} z^2 + \mathcal{O}(z^3),
\end{equation}
and
\begin{equation}\label{eq:sumHa}
\sum_{a=0}^3 H_a(z) = 1 + \frac{z}{12} + \frac{\kappa^2+19\tau^2}{360\kappa^2}z^2 +\mathcal{O}(z^3).
\end{equation} 

(b) The signed square geodesic separation of points along $\gamma$ obeys
\begin{equation}
    \sigma_0(\gamma(s),\gamma(s')) = -(s-s')^2 \Upsilon(\kappa^2(s-s')^2),
\end{equation}
where $\Upsilon$ is entire analytic with 
\begin{equation} \label{eq:upsilongeneral} 
\Upsilon(z) = 1 + \frac{1}{12}z +
 \frac{\kappa^2-\tau^2}{360\kappa^2} z^2 + \mathcal{O}(z^3) 
\end{equation} 
as $z\to 0$. Furthermore, for  $z\in [0,\infty)$, $\Upsilon(z)$ is real with $\Upsilon(z)\ge 1$.
\end{lemma}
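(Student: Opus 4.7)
The plan is to exploit the $s$-analyticity of $\Lambda(s):=\exp(sM)$ together with the Lorentz invariance it preserves: everything reduces to translation-invariance arguments, parity and symmetry, and a Frenet--Serret-based Taylor expansion. For part (a), since $\Lambda(s)\Lambda(s')^{-1}=\Lambda(s-s')$ preserves $\eta$, one has $C_a(s,s')=\eta(\Lambda(s)e_a(0),\Lambda(s')e_a(0))=\eta(\Lambda(s-s')e_a(0),e_a(0))$, manifestly a function of $r:=s-s'$ alone; writing $\gamma(s)-\gamma(s')=\int_{s'}^{s}\Lambda(u)\dot\gamma(0)\,\diff u$ and translating the integration variable yields the analogous statement for $D_a$. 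Both $C_a(r)$ and $D_a(s,s')D_a(s',s)$ are entire in $r$ because the matrix exponential is entire; symmetry of $\eta$ forces $C_a$ to be even in $r$, while the exchange symmetry $(s,s')\leftrightarrow(s',s)$ combined with the obvious vanishing at $r=0$ forces $D_a(s,s')D_a(s',s)$ to take the form $-r^{2}\times(\text{even entire in }r)$. This produces entire $G_a$, $H_a$ realising~\eqref{eq:CtoGandDtoH} (when $\kappa\neq 0$; the inertial case $\kappa=\tau=\upsilon=0$ is verified directly, since $e_a$ and $\dot\gamma$ are then constant).

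To extract the Taylor coefficients in~\eqref{eq:sumGa} and~\eqref{eq:sumHa}, I would Taylor-expand $\gamma(s')$ and $e_a(s')$ about $s$, applying the Frenet--Serret relations $\dot e_a=K_a{}^b e_b$ iteratively to generate all required higher derivatives in terms of the tetrad and the constants $\kappa,\tau,\upsilon$; expressions such as $\ddot\gamma=\kappa e_1$ and $\dddot\gamma=\kappa^{2}e_0+\kappa\tau e_2$ already appear in the excerpt and propagate to higher orders in the same way. Contracting with $\eta$, summing on $a$, and exploiting tetrad orthonormality together with the tridiagonal structure of $K$ then yields the stated coefficients -- many cross-terms cancel, leaving the compact expressions given. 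I expect this to be the main obstacle in practice: the book-keeping is error-prone because of the interleaved signs from $\eta_{ii}=-1$, the alternating signs in the Taylor expansion of $\gamma(s)-\gamma(s')$ in powers of $(s'-s)$, and the tridiagonal couplings mixing $\kappa$, $\tau$, $\upsilon$ at successive orders, so computer algebra is effectively indispensable.

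For part (b), $\sigma_0(\gamma(s),\gamma(s'))$ depends only on $r$ because the Killing flow generated by $\xi$ from Section~\ref{sec:stationary} is an isometry of Minkowski spacetime sending $(\gamma(s),\gamma(s'))$ to $(\gamma(s+t),\gamma(s'+t))$; symmetry of $\sigma_0$ in its arguments gives evenness in $r$; and $\gamma(s)-\gamma(s')\sim r\,\dot\gamma(s')$ as $r\to 0$, combined with $\eta(\dot\gamma,\dot\gamma)=1$, yields $-\sigma_0=r^{2}+O(r^{4})$, so $\sigma_0=-r^{2}\Upsilon(\kappa^{2}r^{2})$ with $\Upsilon$ entire and $\Upsilon(0)=1$. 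The higher coefficients in~\eqref{eq:upsilongeneral} are extracted by the same Taylor procedure as in part (a). For the positivity $\Upsilon(z)\geq 1$ on $[0,\infty)$, the key observation is that $\gamma(s)-\gamma(s')=\int_{s'}^{s}\dot\gamma(r)\,\diff r$ is an integral of future-pointing unit timelike vectors and is therefore itself future-pointing and timelike, by convexity of the future timelike cone; the reverse Lorentzian triangle inequality, which follows from the Lorentzian Cauchy--Schwarz inequality $\eta(u,v)\geq\sqrt{\eta(u,u)\eta(v,v)}$ for future-directed timelike $u,v$, then gives $\sqrt{-\sigma_0}\geq\int_{s'}^{s}\sqrt{\eta(\dot\gamma,\dot\gamma)}\,\diff r=|s-s'|$, equivalent to $\Upsilon(\kappa^{2}r^{2})\geq 1$ for every real $r$ and hence on all of $[0,\infty)$.
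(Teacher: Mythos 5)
Your proposal is correct and follows essentially the same route as the paper: translation invariance from the group property of $\exp(sM)$, evenness from the symmetry of $\eta$ and of the exchange $(s,s')\leftrightarrow(s',s)$, Taylor coefficients via iterated Frenet--Serret relations (handled in the paper by computer algebra in an appendix, as you anticipate), and positivity of $\Upsilon$ from the maximising property of timelike geodesics, which the paper simply cites and you derive via the reverse triangle inequality. The only cosmetic differences are that the paper establishes translation invariance of $D_a$ and $\sigma_0$ by differentiating in $s'$ rather than by your integral representation and Killing-isometry arguments; both are sound.
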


The Lemma now allows us to compute the point-split energy density by evaluating the right-hand side of~\eqref{eq:diff2G0} and summing over $a$. 
We use the fact (explained in Appendix~\ref{sec:details}) that
\begin{equation}\label{eq:keyfact}
    \lim_{\epsilon\to 0+} \frac{(s-s')^{2j}}{\sigma_\epsilon(\gamma(s),\gamma(s'))^k} = 
    \frac{(-1)^k}{\Upsilon(\kappa^2(s-s')^2)^k}
    \lim_{\epsilon\to 0+} \frac{1}{(s-s'-i\epsilon)^{2(k-j)}},
\end{equation}
where the limits are taken in the sense of distributions, as is the multiplication by a smooth prefactor on the right-hand side. If $j=k$, the distributional limit on the right-hand side may be replaced by unity. 
In particular, when calculating $T(s,s')$ from~\eqref{eq:diff2G0}, the factor $(s-s')^2$ in $D_a(s,s')D_a(s',s)$ cancels a factor of $(s-s'-i\epsilon)^2$ in the denominator, as $\epsilon\to 0+$.
The upshot is that 
\begin{equation}\label{eq:K}
    T(s,s') =-\frac{1}{4\pi^2}
    \lim_{\epsilon\to 0+} \frac{K(\kappa^2(s-s')^2)}{(s-s'-i\epsilon)^4}, \qquad\text{where}\qquad
    K(z) = 
    \sum_{a=0}^3 \left(
\frac{ G_a(z)}{\Upsilon(z)^2} - 4 \frac{ H_a(z)}{\Upsilon(z)^3} \right)
\end{equation}
is a meromorphic function that is analytic in a neighbourhood of the positive real axis (on which 
$\Upsilon$ is bounded away from zero).

The singular part is easily isolated by splitting off the first two terms of the Taylor series for $K$ from the remainder, which carries a leading factor of $(s-s')^4$ that
cancels the denominator in the limit $\epsilon\to 0+$. Similarly, the $\mathcal{O}(z)$ part of the Taylor series partly cancels the denominator. Thus,
$T(s,s')=T_{\text{sing}}(s-s')+ T_{\text{reg}}(s-s')$ with
\begin{equation}
    T_{\text{sing}}(s) = -\frac{1}{4\pi^2}
    \lim_{\epsilon\to 0+} \frac{K(0)}{(s-i\epsilon)^4} - \frac{1}{4\pi^2}
    \lim_{\epsilon\to 0+} \frac{\kappa^2 K'(0)}{(s-i\epsilon)^2},
\end{equation}
and
\begin{equation}
    T_{\text{reg}}(s) = -\frac{\kappa^4}{4\pi^2}J((\kappa s)^2), \qquad\text{where}\qquad
    J(z)=\frac{K(z)-K(0)-K'(0)z}{z^2}
\end{equation} 
is analytic on a neighbourhood of the positive real axis, so $J((\kappa s)^2)$ is smooth for $s\in \mathbb{R}$.

Using the Lemma, we may read off that $K(0)=-6$, thus establishing \eqref{eq:Tsing}, with $A=\kappa^2 K'(0)$. 
Meanwhile, $T_{\text{reg}}(s)$ is smooth, even, and real-valued for $s\in\mathbb{R}$. 
Provided that $K(z)=\mathcal{O}(z)$ as $z\to\infty$ on the real axis, we find that $T_{\text{reg}}(s)=\mathcal{O}(s^{-2})$ as $s\to\infty$, which completes the properties needed in Section~\ref{sec:genQEI}. Furthermore,
\begin{equation}
     T_{\text{reg}}(0)=-\frac{J(0)\kappa^4}{4\pi^2}=-\frac{K''(0)\kappa^4}{8\pi^2}. 
\end{equation}
Note that if we had used the tetrad $\tilde{e}$ instead, the function $K$ would be unchanged, owing to the remarks before the Lemma. Thus the QEIs obtained from $\tilde{e}_a$ and $e_a$ are identical. 

These results now provide a calculational method to determine the QEI along stationary worldlines.
Starting from the generator $M\in\mathfrak{so}(1,3)$ and the initial $4$-velocity $u(0)$, choose a tetrad as described at the start of this section, and compute the proper acceleration $\kappa = \sqrt{-\eta(Mu(0),M u(0))}$. The translational invariance of $C_a$ and $D_a$ means that they can be calculated conveniently as
\begin{equation} \label{eq:CandDaltdefn}
      C_a(s,s') = \eta_{\mu\nu}e_a^\mu(s-s')e_a^\nu(0) , \qquad D_{a}(s,s') = 
    -(\gamma(s'-s)-\gamma(0))_\mu e_a^\mu(0),
\end{equation}
from which $G_a$ and $H_a$ are easily obtained. The function $\Upsilon$ is computed directly from 
the Lorentz interval between $\gamma(0)$ and $\gamma(s)$. Then construct $K(z)$ according to 
\eqref{eq:K} and check that $K(z)=\mathcal{O}(z)$ as $z\to\infty$. Then the QEI along $\gamma$ 
is given by~\eqref{eq:QEIfinalgeneral}, with constants  
\begin{equation}\label{eq:ABfromK}
    A=\kappa^2 K'(0),\qquad
    B=-2 \kappa^4 K''(0). 
\end{equation}

The constants $A$ and $B$ can be computed from the first few terms of the Taylor expansions of $\sum_a G_a$, $\sum_a H_a$ and $\Upsilon$, given in~\eqref{eq:sumGa},~\eqref{eq:sumHa} and~\eqref{eq:upsilongeneral} respectively. After a calculation, one finds
\begin{equation}
K(z)= -6 + z\dfrac{\kappa^2+\tau^2+\upsilon^2}{\kappa^2}-z^2\dfrac{1}{360\kappa^4}\left(3\kappa^4+62\kappa^2\tau^2+30(\kappa^2+\tau^2+\upsilon^2)^2\right) + \mathcal{O}(z^3),
\end{equation}
from which the formulae~\eqref{eq:curvatureinvA} and~\eqref{eq:curvatureinvB} follow immediately.
Nonetheless, this is perhaps not the most illuminating calculation and also does not provide
a check that $K(z)=O(z)$ for large real $z$, which was assumed above. For these reasons,
and their own intrinsic interest, we will also provide explicit calculations in Section~\ref{sec:prototypes} that together cover all possible stationary worldlines.

\section{QEIs for the prototypical stationary worldlines}\label{sec:prototypes}
 
We have now established the general QEI for stationary worldlines in Minkowski spacetime, 
assuming a technical condition on the growth of $K$. In this section,
we reduce the problem of computing the QEI for a general stationary worldline to six prototypical cases, which will be treated in turn. These calculations follow the method
of Section~\ref{sec:method} and result in explicit formulae for $K$. In this way it is seen that
the growth condition holds in all cases and we also obtain a check on the Taylor series calculations in~Appendix~\ref{sec:Taylor}. 

We have already discussed the fact that any stationary worldline 
may be brought into one of the six standard forms by a Poincar\'e transformation, without changing the curvature invariants. Owing to Poincar\'e invariance of the vacuum state, and because Poincar\'e invariance maps an adapted tetrad of the form $e_a(s)=\exp(sM)e_a(0)$
along the original curve to a tetrad with the same properties on the new one, the point-split energy density obtained by the method of Section~\ref{sec:method} is exactly the same for the two worldlines, which accordingly share the same QEI bound. 

The QEIs for the prototypical stationary worldlines are now given in turn. Most of the computations that follow were conducted using the computer algebra system Maple.
  
\subsection{Trivial subgroup: inertial motion} \label{sec:inertial}

For the inertial worldline $\gamma(s) = (s,0,0,0)$, we employ the adapted tetrad $\partial_t,\partial_x,\partial_y,\partial_z$, which is constant along $\gamma$, leading immediately to the relations $C_0(s,s')=1$, $C_i(s,s')=-1$ for $i=1,2,3$, while $D_0(s,s') = s-s'$, $D_i(s,s')=0$ for all $s,s'$. It follows that $G_0=H_0\equiv 1$, $G_i\equiv -1$, $H_i\equiv 0$. Furthermore, $\Upsilon\equiv 1$ because $\sigma_0(\gamma(s),\gamma(s'))=-(s-s')^2$. Hence $K\equiv -6$ and 
one finds $T(s,s')=T_{\text{sing}}(s-s')$ where
\begin{equation}
    T_{\text{sing}}(s) = \lim_{\epsilon\to 0+} \frac{3}{2\pi^2 (s-i\epsilon)^4}.
\end{equation}
Consequently $T_{\text{reg}}$ vanishes identically, and we may read off immediately that $A=B=0$, 
reproducing QEI~\eqref{eq:genQEI3} by substituting into~\eqref{eq:QEIfinalgeneral},
and in agreement with~\eqref{eq:curvatureinvA} and~\eqref{eq:curvatureinvB}.
Of course these results are easily obtained by direct differentiation of the two-point function; our purpose here is to show how they follow from formulae in Section~\ref{sec:method}. 

Alternatively, we may proceed by taking the Fourier transform
\begin{equation}\label{eq:That_inertial}
\hat{T}_{\text{sing}}(u) =  u^3 \Theta(u)/(2\pi),
\end{equation}
from which we obtain $Q(\alpha)= \alpha^4\Theta(\alpha)/(16\pi^3)$ by~\eqref{eq:Qboundinitial}, leading to~\eqref{eq:genQEI3} as discussed in the introduction. 

\subsection{Hyperbolic subgroups: linear acceleration} \label{sec:linaccel}

We consider a uniformly linearly accelerated worldline 
\begin{equation}
    \gamma(s) = (a^{-1}\sinh as,\ a^{-1}\cosh as,\ 0,\ 0),
\end{equation}
whose velocity evolves under the $1$-parameter group of $tx$-boosts $\dot{\gamma}^\mu(s) = H^\mu_{\phantom{\mu}\nu}(s)\dot{\gamma}^\nu(0)$, where
\begin{equation}\label{eq:H}
    H^\bullet_{\phantom{\bullet}\bullet}(s) = \begin{pmatrix} 
    \cosh as & \sinh as & 0 & 0 \\ 
    \sinh as & \cosh as & 0 & 0 \\ 
    0 & 0 & 1 & 0 \\
    0 & 0 & 0 & 1
    \end{pmatrix} = 
    \exp \begin{pmatrix} 
    0 &  as & 0 & 0 \\ 
    as & 0 & 0 & 0 \\ 
    0 & 0 & 0 & 0 \\
    0 & 0 & 0 & 0
    \end{pmatrix}
\end{equation}
and $0\neq a\in\mathbb{R}$ is fixed.  
Noting that the initial velocity and its first two derivatives are $\dot{\gamma}(0)=\partial_t$, $\ddot{\gamma}(0)=a\partial_x$, $\ddot{\gamma}(0) = a^2\partial_t$, we obtain
an adapted tetrad by choosing the tetrad  
$\partial_t,\partial_x,\partial_y,\partial_z$ at $s=0$, and applying the prescription $e_a^\mu(s) =H^\mu_{\phantom{\mu}\nu}(s)e_a^\nu(0)$ to find
\begin{equation}
    e_0(s) =  \cosh as\partial_t+\sinh as\partial_x, \quad e_1(s) =  \sinh as\partial_t +\cosh as\partial_x, \quad e_2(s) = \partial_y, \quad e_3(s) = \partial_z.
\end{equation} 

Straightforward calculation, following the method of Section~\ref{sec:method}, gives
\begin{equation}
    K(a^2 s^2) = -\frac{3(as)^4}{8\sinh^4(as/2)}
\end{equation}
and hence 
\begin{equation}\label{eq:Tacc}
    T(s,s') = \lim_{\epsilon\to 0+} \frac{3a^4(s-s')^4\cosech^4(a(s-s')/2)}{32\pi^2(s-s'-i\epsilon)^4},
\end{equation}
which may be simplified to
\begin{equation}\label{eq:Tacc2}
    T(s,s')= \lim_{\epsilon\to 0+} \dfrac{3a^4}{32\pi^2}\cosech^4\left(a(s-s'-i\epsilon)/2\right).
\end{equation}
Here, we have used the
general fact that $\lim_{\epsilon\to 0+} g(x)f(x-i\epsilon)=\lim_{\epsilon\to 0+} g(x-i\epsilon)f(x-i\epsilon)$ in the sense of distributions, when  $f$ is analytic in a strip $Z=\{x-iy:x\in\mathbb{R},~0<y<y_0\}\subset\mathbb{C}$ with $\sup_{z\in Z} |f(z) (\Im z)^N|<\infty$ for some $N>0$ and $g$ is analytic on $Z$ and continuous on $Z\cup \mathbb{R}$. 

As the function $K(z)$ evidently decays rapidly as $z\to\infty$ on the real axis, the method of Section~\ref{sec:method} allows us to read off the QEI from the derivatives of $K(z)$ at $z=0$ according to~\eqref{eq:ABfromK}. Using
\begin{equation}
    K(z) = \frac{3z^2}{8\sinh^4 (\sqrt{z}/2)} = -6+z-\frac{11}{120}z^2+O(z^3),
\end{equation}
we find $A=a^2$ and $B=11a^4/30$, in agreement with~\eqref{eq:curvatureinvA} and~\eqref{eq:curvatureinvB} using the invariants from Table~\ref{tab:inv} and reproducing the result~\eqref{eq:QEIlinacc} from~\cite{QEI2006FewsterPfenning}. In that reference, the point-split energy density~\eqref{eq:Tacc2} was found by a direct calculation. Writing $T(s,s')=T(s-s')$, the Fourier transform yields
\begin{equation}
    \hat{T}(u)   =\frac{u^3-a^{2}u}{2\pi(1-e^{-2\pi u /a})}
\end{equation}
and by using the last expression in~\eqref{eq:Qeven}, a calculation gives
\begin{align}
    Q_{\text{even}}(\alpha) 
&= \dfrac{1}{32\pi^3}\left(\alpha^4 + 2a^2\alpha^2 + \dfrac{11}{30}a^4\right),
\end{align}
from which~\eqref{eq:QEIlinacc} follows on inserting the above expression into~\eqref{eq:QEIQeven} and using Parseval's theorem.

\subsection{Hyperbolic subgroups: the catenary} \label{sec:catenary}

Now consider the catenary
\begin{equation} \label{eq:cattraj}
    \gamma(s) = (a^{-1}\cosh\chi\,\sinh as,a^{-1}\cosh\chi\,\cosh as,-s\sinh\chi,0),
\end{equation}
for constant $a\neq 0$, with initial velocity 
\begin{equation}
    \dot{\gamma}^\bullet(0) = (\cosh\chi,0,-\sinh\chi,0),
\end{equation}
and second and third derivatives 
\begin{equation}
    \ddot{\gamma}^\bullet(0) = (0,a\cosh\chi,0,0), \qquad \dddot{\gamma}^\bullet(0) = (a^2\cosh\chi,0,0,0).
\end{equation}
The velocity evolves under the hyperbolic subgroup~\eqref{eq:H}.
Writing $C=\cosh{\chi}$ and $V=\sinh{\chi}$, the tetrad
\begin{align}
    e_0^\bullet(s) &= (C \cosh{a s}, C\sinh{a s }, -V, 0),& \
    e_1^\bullet(s) &= (\sinh{a s}, \cosh{a s}, 0, 0), \nonumber\\
    e_2^\bullet(s) &= (-V\cosh{a s}, -V\sinh{a s}, C, 0),& \
    e_3^\bullet(s) &= (0,0,0,1)
\end{align}
is adapted to $\gamma$ with $e_a^\mu(s) =H^\mu_{\phantom{\mu}\nu}(s)e_a^\nu(0)$. 
A calculation results in the formula
\begin{equation}
    K(z) = -\frac{
    4V^2(\sinhc^2(r)+v^2)\sinh^2(r)+2(4C^2-1)\sinhc^2(r)
    -16V^2\sinhc(2r) + 2v^2(4C^2-3)
    }{C^4(\sinhc^2(r)-v^2)^3}
\end{equation}
where $v=\tanh\chi$, $r=\sqrt{z}/(2\cosh\chi)$ and $\sinhc(x)=\sinh(x)/x$ is the hyperbolic version of the $\sinc$ function.
Note that we need not specify a branch for the square root as it always appears in the argument of an even entire function, and also that $K(z)\to 0$ as $z\to\infty$ in $\mathbb{R}$. 
The series expansion is
\begin{equation}
    K(z) = -6 + \frac{2C^2-1}{C^2} z - \frac{185C^4 - 182C^2+30}{360C^4} z^2 +O(z^3)
\end{equation}
and as $\kappa=a C$ we may read off $A = a^2(2C^2-1) = a^2\cosh{2\chi}$ and $B=(185C^4 - 182C^2+30)a^4/90$. It is straightforward that these values agree with~\eqref{eq:curvatureinvA} and~\eqref{eq:curvatureinvB} using the curvature invariants for this case.  
In particular, the resulting QEI is compatible with a constant negative energy density of \begin{equation}
-T_{\text{reg}}(0)=-\frac{(185\cosh^4\chi-182\cosh^2\chi+30)a^4}{1440\pi^2}
\end{equation}
along the worldline~\eqref{eq:cattraj}. As would be expected, the QEI for linear acceleration 
is obtained in the limit $\chi\to 0$, but for $\chi\neq 0$,
we have $-T_{\text{reg}}(0)< -11a^4/480\pi^2$, and the QEI bound is consistent with a strictly more negative constant energy density than is the case for the linearly accelerated worldline with the same value of $a$.

\subsection{Parabolic subgroups: the semicubical parabola} \label{sec:semicubical}

We now consider the semicubical parabola
\begin{equation}  
    \gamma(s) = \left(s+\dfrac{1}{6}a^2s^3, \dfrac{1}{6}a^2s^3, 0, \dfrac{1}{2}as^2\right),
\end{equation}
for constant $a\neq 0$, whose velocity evolves as $\dot{\gamma}^\mu(s)= P^\mu_{\phantom{\mu}\nu}(s)\dot{\gamma}(0)$ with $\dot{\gamma}(0)=\partial_t$, where  $P^\mu_{\phantom{\mu}\nu}$ was defined in~\eqref{eq:P}. From the initial derivatives $\dot{\gamma}(0)=\partial_t$, $\ddot{\gamma}(0)=a\partial_z$, $\dddot{\gamma}(0)=a^2(\partial_t+\partial_x)$ one sees that 
the initial tetrad 
$e_0(0)=\partial_t$, $e_1(0)=\partial_z$, $e_2(0)=\partial_x$, $e_3(0)=\partial_y$ determines an adapted tetrad 
\begin{align}
     e_0^\bullet(s) &= \left(1+\tfrac{1}{2}(as)^2,  \tfrac{1}{2}(as)^2, 0, as \right), & 
    e_1^\bullet(s) &= \left(as, as, 0, 1 \right), \notag \\
    e_2^\bullet(s) &= \left(-\tfrac{1}{2}(as)^2, 1-\tfrac{1}{2}(as)^2,  0,  -as\right), & 
    e_3^\bullet(s) &= (0, 0, 1,0), 
\end{align}
at general proper time $s$ obeying $e_a^\mu(s)=P^\mu_{\phantom{\mu}\nu}(s) e_a^\nu(0)$.

Straightforward calculation now gives
\begin{equation}
    K(z) = -\frac{6-z/2+5z^2/36}{(1+z/12)^3},
\end{equation}
with 
\begin{equation}
K(z)= -6+2z-\frac{37}{72}z^2+O(z^3)
\end{equation}
as $z\to 0$ and $K(z)=O(z^{-1})$ for $z\to\infty$.
Thus, the point-split energy density is
\begin{equation}\label{eq:Tsplitsemicubical}
    T(s,s') = \lim_{\epsilon\to 0+}\frac{3-a^2(s-s')^2/4+5a^4(s-s')^4/72}{\pi^2(s-s'-i\epsilon)^4(1+a^2(s-s')^2/12)^3}
\end{equation}
and~\eqref{eq:ABfromK} gives $A=2a^2$ and $B=37a^4/18$, in agreement with~\eqref{eq:curvatureinvA} and~\eqref{eq:curvatureinvB}. Thus the QEI along a semicubical parabola is
\begin{equation} \label{eq:QEIfinalsemienergy}
    \int \diff{s} |g(s)|^2\langle\mathbf{:}T_{\mu\nu}\dot{\gamma}^\mu\dot{\gamma}^\nu\mathbf{:}\rangle_\omega(\gamma(s)) \geq -\dfrac{1}{16\pi^2}\int_{-\infty}^\infty \diff{s} \left(\lvert g''(s) \rvert^2 + 4a^2\lvert g'(s) \rvert^2 + \dfrac{37}{18}a^4\lvert g(s) \rvert^2\right),
\end{equation}
for any Hadamard state $\omega$. The long-time scaling limit of the above QEI is then
\begin{equation}
\liminf_{\lambda\xrightarrow{}\infty}\int_{-\infty}^\infty \diff{s} |g_\lambda(s)|^2\langle\mathbf{:}T_{\mu\nu} \dot{\gamma}^\mu\dot{\gamma}^\nu\mathbf{:} \rangle_\omega(\gamma(s))
\ge - \dfrac{37}{288\pi^2}a^4,
\end{equation}
where as usual we choose $g$ with unit $L^2$-norm. The QEI is therefore compatible with 
a constant negative energy density $-37 a^4/(288\pi^2)$ along the semicubical parabola. As one would expect, the QEI
reduces to the inertial case as $a\to 0$.

In fact the QEI~\eqref{eq:QEIfinalsemienergy} can also be obtained by 
a different method. Writing $T(s,s')=T(s-s')$, the Fourier transform may be computed by contour methods as
\begin{equation} \label{eq:fouriernoepsilonsemienergyconcise}
    \hat{T}(u) = \dfrac{1}{2\pi} \left[\left(\dfrac{2u^2}{\sqrt{12}}+\dfrac{7\lvert u\rvert}{8}a+\dfrac{15}{8\sqrt{12}}a^2\right)ae^{-\lvert u\rvert\sqrt{12}/a} + \left(u^3+2ua^2\right)\Theta(u)\right].
\end{equation}
The calculation is considerably simplified if one first replaces powers of
$s-s'$ in~\eqref{eq:Tsplitsemicubical} by powers of $s-s'-i\epsilon$.
To find $Q_{\text{even}}(\alpha)$, we note that
$\hat{T}_{\text{odd}}(u)= (u^3+2ua^2)/(4\pi)$, and also that the integral
of $\hat{T}$ over $(-\infty,0]$ may be evaluated in terms of $\Gamma$-functions. 
After manipulation,  the formula~\eqref{eq:Qeven} gives
\begin{align}
    Q_{\text{even}}(\alpha) &= \dfrac{1}{4\pi^3}\int_{-\infty}^0 \left(\dfrac{2u^2}{\sqrt{12}}+\dfrac{7\lvert u\rvert}{8}a+\dfrac{15}{8\sqrt{12}}a^2\right)ae^{-\lvert u\rvert\sqrt{12}/a} \diff{u} + \dfrac{1}{8\pi^3}\int_0^\alpha \left(u^3+2ua^2\right) \diff{u} \nonumber \\
    &= \dfrac{1}{32\pi^3}\alpha^4 + \dfrac{a^2}{8\pi^3}\alpha^2 + \dfrac{37a^4}{576\pi^3}.
\end{align}
Inserting this expression in~\eqref{eq:QEIQeven} and using Parseval's theorem we reproduce~\eqref{eq:QEIfinalsemienergy}.

\subsection{Elliptic subgroups: uniform rotation} \label{sec:uniformrotation}

Next, consider the uniformly rotating worldline
\begin{equation}
    \gamma(s) = \left(s\cosh\chi, 0, r\cos{\omega s}, r\sin\omega s\right),
\end{equation}
where the radius $r>0$ and proper angular velocity $\omega\neq 0$ together fix the rapidity $\chi = \sinh^{-1}(r\omega)$. In this case,
the velocity evolves under rotations in the $yz$-plane as
$\dot{\gamma}^\mu(s) = R^\mu_{\phantom{\mu}\nu}(
s)\dot{\gamma}^\nu(0)$, where  
\begin{equation}
    R^\bullet_{\phantom{\bullet}\bullet}(s) = \begin{pmatrix} 
    1 & 0 & 0 & 0 \\
    0 & 1 & 0 & 0 \\
    0 & 0 & \cos\omega s & -\sin\omega s \\
    0 & 0 & \sin\omega s & \cos\omega s
    \end{pmatrix} = \exp \begin{pmatrix} 
    0 & 0 & 0 & 0 \\
    0 & 0 & 0 & 0 \\
    0 & 0 & 0 & -\omega s \\
    0 & 0 & \omega s & 0
    \end{pmatrix}.
\end{equation} 
Meanwhile, the initial velocity and its first two derivatives
are
\begin{align*}
    \dot{\gamma}^\bullet(0) &= \left( C, 0, 0,V \right)\\
    \ddot{\gamma}^\bullet(0) &= \left( 0, 0,-V\omega ,0 \right) \\
     \dddot{\gamma}^\bullet(0) &= \left( 0,0,0, -V\omega^2 \right),
\end{align*} 
where we have written $C=\cosh\chi$ and $V=r\omega=\sinh\chi$.
Then $e_0^\bullet(0)=(C,0,0,V)$, $e_1^\bullet(0)=(0,0,-1,0)$,
$e_2^\bullet(0)=(-V,0,0,-C)$, $e_3^\bullet(0)=(0,1,0,0)$,
defines an adapted tetrad at $s=0$, which can be extended
along $\gamma$ by
$e_a^\mu(s)= R^{\mu}_{\phantom{\mu}\nu}(\omega s) e_a^\nu(0)$ to give
\begin{align}
    e_0^\bullet(s) &= (C, 0, -V\sin{\omega s }, V\cos{\omega s}),& 
    e_1^\bullet(s) &= (0, 0, -\cos{\omega s}, -\sin{\omega s}), \nonumber\\
    e_2^\bullet(s) &= (-V, 0, C\sin{\omega s}, -C\cos{\omega s}),& 
    e_3^\bullet(s) &= (0,1,0,0).
\end{align} 
A calculation gives
\begin{equation}
 K(z) =   \frac{ 4C^2 \sin^2(\theta)(1+v^2\sinc^2(\theta)) -2(4C^2-3)v^2\sinc^2(\theta)  
      + 16V^2\sinc(2\theta) +2(4C^2-1) 
 }{C^4 (1-v^2\sinc^2(\theta))^3},
\end{equation}
where $\theta=\sqrt{z}/(2\sinh(\chi))$, with series expansion 
\begin{equation}
    K(z) = -6 + \frac{2\cosh^2\chi-1}{\sinh^2\chi}z -
    \frac{185\cosh^4\chi - 188\cosh^2\chi + 33}{360\sinh^4\chi}z^2 + O(z^3).
\end{equation}
As $\kappa=r\omega^2=\omega\sinh\chi$ we read off
$A=\omega^2\cosh(2\chi)=(2(r\omega)^2+1)\omega^2$ and 
\begin{equation} \label{eq:Tregunirot}
    B = \frac{\omega^4(185 \cosh^4\chi  -188\cosh^2\chi + 33)}{90} = 
     \frac{\omega^4(30 + 182 (r\omega)^2 + 185 (r\omega)^4)}{90},
\end{equation}
which may be substituted into~\eqref{eq:QEIfinalintro} to obtain the QEI in this case. In particular, the QEI is compatible with a constant negative energy density of 
\begin{equation}
    - T_{\text{reg}}(0) =-\frac{\omega^4(30 + 182 (r\omega)^2 + 185 (r\omega)^4)}{1440\pi^2}
\end{equation}
along the worldline. While the point-split energy density may be written down in terms of $K$, we do not know of any closed-form expression for its transform. Thus the method of Sections~\ref{sec:genQEI} and~\ref{sec:method} is the only available way to compute this QEI.

Note that the QEI reduces to the inertial case if $\omega\to 0$ with $r$ fixed -- indeed, even if $r=o(\omega^{-2})$. One might initially be surprised that it does not reduce in the same way
when $r\to 0+$ with $\omega$ fixed. The explanation is that the torsion of the curve does not vanish in this limit, even though the curvature $\kappa$ does. This neatly illustrates the influence of higher
curvature invariants on the QEI bound.

\subsection{Loxodromic subgroups} \label{sec:loxodromic}

Finally, we study the loxodromic worldline
\begin{equation} 
    \gamma^\bullet(s) = (Ca^{-1}\sinh(a s),Ca^{-1}\cosh(a s),V\omega^{-1}\cos(\omega s),V\omega^{-1}\sin(\omega s)),
\end{equation}
where $C=\cosh\chi$, $V=\sinh\chi$ for fixed $\chi\neq 0$, $a\neq 0$ and $\omega\neq 0$. This
worldline undergoes both rotation in the $yz$-plane at constant proper angular velocity $\omega$ and constant distance $|V/\omega|$ from the $x$-axis, while undergoing uniform acceleration in the $x$-direction.
The velocity evolves as $\dot{\gamma}^\mu(s) = L_{a,\omega}{}^\mu_{\phantom{\mu}\nu}(s) \dot{\gamma}^\nu(0)$,
where
\begin{equation}
    L_{a,\omega}{}^\bullet_{\phantom{\bullet}\bullet}(s) = \begin{pmatrix}
    \cosh a s & \sinh a s & 0 & 0 \\
    \sinh a s & \cosh a s & 0 & 0 \\
                        0 & 0   & \cos \omega s & -\sin \omega s \\
                        0 & 0   & \sin \omega s & \cos \omega s
    \end{pmatrix}=
    \exp\begin{pmatrix}
    0 & as & 0 & 0 \\
    as& 0 & 0 & 0 \\
    0 & 0 & 0 & -\omega s \\
    0 & 0 & \omega s & 0
    \end{pmatrix}.
\end{equation}

It can be checked that  
\begin{align}
    e_0^\bullet(s) &= (C\cosh{a s}, C\sinh{a s}, -V\sin{\omega s}, V\cos{\omega s}), \nonumber\\
    e_1^\bullet(s) &= (Ca \kappa^{-1}\sinh{a s}, Ca\kappa^{-1}\cosh{a s}, -V\omega \kappa^{-1}\cos{\omega s}, -V\omega \kappa^{-1}\sin{\omega s}), \nonumber\\
    e_2^\bullet(s) &= (-V\cosh{a s}, -V\sinh{a s}, C\sin{\omega s}, -C\cos{\omega s}), \nonumber\\
    e_3^\bullet(s) &= (V\omega \kappa^{-1}\sinh{a s},V\omega \kappa^{-1}\cosh{a s},Ca \kappa^{-1}\cos{\omega s},Ca \kappa^{-1}\sin{\omega s})
\end{align}
defines an adapted tetrad for $\gamma$, obeying
$e_a^\mu(s)= L_{a,\omega}{}^\mu_{\phantom{\mu}\nu}(s) e_a^\nu(0)$, while the calculation of $K$ by computer algebra produces
\begin{align}
K(z) &= \frac{1}{(C^2\sinhc^2(a r)-V^2\sinc^2(\omega r))^3}\left(
16 C^2 V^2 \sinc(2\omega r)\sinhc(2a r) \right.\notag\\
&\qquad\left.
+4(C^2\sin^2(\omega r)-V^2\sinh^2(a r))(V^2\sinc^2(\omega r)+C^2\sinhc^2(a r)) \right.\notag\\
&\qquad\qquad\left.
-2V^2(C^2+3V^2)\sinc^2(\omega r) - 2C^2(3C^2+V^2)\sinhc^2(a r) 
\right),
\end{align}
where $r=\sqrt{z}/(2\sqrt{C^2a^2+V^2\omega^2})$. For large real $z$, it is easily seen that
\begin{equation}
    K(z)\sim -4V^2 (ar)^2/(C^4\sinhc^2(ar))\to 0
\end{equation}
as $z\to\infty$ in $\mathbb{R}$. 
Meanwhile, the Taylor expansion about $z=0$ reads
\begin{align}
    K(z) &= -6 + \frac{(a^2+\omega^2)(C^2+V^2)}{C^2a^2+V^2\omega^2} z   - \frac{z^2}{360(C^2a^2+V^2\omega^2)^2}\left(
    (33a^4+60a^2\omega^2+30\omega^4)C^4 \right.\notag\\ 
    &\qquad +\left.
    (122a^4+250a^2\omega^2+122\omega^4)(CV)^2
    +(33\omega^4+60a^2\omega^2+30a^4)V^4\right) + O(z^3)\notag
\end{align}
so $A = (C^2+V^2)(a^2 + \omega^2)$, while $B$ is given by
\begin{align}
    90B &= (3a^4+30(a^2+\omega^2)^2)C^4+(3\omega^4+(30(a^2+\omega^2)^2)V^4
    +(122(a^2+\omega^2)^2+6a^2\omega^2)C^2V^2\notag \\
    &=3(C^2a^2+V^2\omega^2)^2 + 62(a^2+\omega^2)^2(CV)^2 + 30 (a^2+\omega^2)^2(C^2+V^2)^2,
    \label{eq:ninetyB}
\end{align}
in which the last term is $30A^2$. These values are easily expressed in terms of 
curvature invariants. Using~\eqref{eq:loxoinv} and $C^2-V^2=1$ one has
\begin{equation}
    \kappa^2(\tau^2+\upsilon^2) = (a^2+\omega^2)^2(CV)^2+(a\omega)^2 = (V^2a^2+C^2\omega^2)(C^2a^2+V^2\omega^2) = \kappa^2(V^2a^2+C^2\omega^2),
\end{equation}
from which the identity 
\begin{equation}
     \kappa^2+\tau^2+\upsilon^2= (a^2+\omega^2)(C^2+V^2) = A
\end{equation}
follows directly, in agreement with~\eqref{eq:curvatureinvA}. Using this in~\eqref{eq:ninetyB} together with~\eqref{eq:loxoinv} we see that $B$ takes the form \eqref{eq:curvatureinvB}. We see that the
QEI is compatible with a constant negative energy density of $-T_{\text{reg}}(0)$ along the worldline~\eqref{eq:loxotraj}, where 
\begin{equation} \label{eq:Tregloxo}
    T_{\text{reg}}(0)=\dfrac{185(a^2 + \omega^2)^2C^4 - (182a^4 + 370a^2\omega^2 + 188\omega^4)C^2 + 33\omega^4 + 60a^2\omega^2 + 30a^4}{1440\pi^2}
\end{equation}
and we have used $V^2=C^2-1$. Note that the QEI does not reduce to the hyperbolic QEI in the limit $\chi\to 0$ with $a$ and $\omega$ fixed. This is because the hypertorsion has a nonzero limit $\upsilon\to \sgn(a)\omega$, even though the torsion vanishes and the curvature tends to $a$. Nonetheless, it is easily seen from~\eqref{eq:ninetyB} that $90B\ge 33a^4$ and hence that $-T_{\text{reg}}(0)<-11a^4/(480\pi^2)$, so that the QEI for loxodromic worldlines can be consistent with a more negative constant value of the energy density than the linearly accelerated worldline with the same value of $a$.

\section{Summary and discussion} \label{sec:discussion}

In this paper we have succeeded in giving an exact closed form expression~\eqref{eq:QEIfinalintro}--\eqref{eq:curvatureinvB} for the QEI for the massless scalar field on any stationary worldline in four-dimensional Minkowski spacetime. This was achieved by a novel method that circumvented the need to take
Fourier transforms of the point-split energy density along the worldline, and which reduced the problem to the computation of certain Taylor coefficients of functions determined by a tetrad adapted to the worldline. In addition, we have given explicit
calculations for the six prototypical classes of stationary trajectory, obtaining agreement with our general result (and also verifying a technical condition needed for the general analysis). The resulting QEI bound depends only on the curvature, torsion and hypertorsion of the worldline. We have also conducted -- in Appendix~\ref{sec:wicksqanalysis} -- a parallel exercise for a quantum inequality on the Wick square. A scaling analysis (see~\eqref{eq:scaling}) shows how these bounds take a universal form on timescales short in relation to the curvature scales, from which they then deviate at longer timescales. In the infinite time limit, they would all allow the field to exhibit a constant negative energy density (or zero in the inertial case). 

Our results complement those of Kontou and Olum~\cite{KontouOlum:2014,KontouOlum:2015}, who computed an absolute QEI~\cite{FewsterSmith:2008} in an approximation of spacetimes where the curvature was weak. There, the worldline was taken to be a geodesic. Our present results indicate the corrections that should enter at leading order when that assumption is dropped. (We reemphasise that our results are exact for massless fields in Minkowski spacetime on stationary trajectories.)

To conclude, we first mention various potential extensions of our work and then return to the question of whether the long-time limits of the QEI are saturated by physical states of the field. Starting with extensions, we expect that our general method would extend fairly directly to stationary worldlines in any even-dimensional Minkowski spacetimes, leading to closed form results in terms of the appropriate curvature invariants. 
In odd dimensions, the vacuum two-point function involves noninteger powers of the geodesic separation, which adds an extra complication. It would be interesting to investigate this case in more detail. 
(For higher-dimensional treatment of the Unruh detector response in higher dimensions, which would be related to the Wick QI in these cases, see~\cite{HodgkinsonLouko:2012}, and for specific calculations relating to the detailed balance definition of Unruh temperature along stationary worldlines in $4$-dimensions, see~\cite{GoodJuarezAubryetal:2020}.) Next, massive fields typically have QEI bounds that are exponentially suppressed relative to the massless ones. Here, we do not expect that our method would easily produce closed-form results, but again, it would be worth investigating, as would the situation for higher spin fields. 
  
Finally, we consider the extent to which the long term average bounds can be attained. In the case of inertial worldlines this is obvious: the long-term average value of zero is attained in the Minkowski vacuum state. 
For uniformly accelerated curves it was noted in~\cite{QEI2006FewsterPfenning} that the   bound~\eqref{eq:AWECacc} is attained by the Rindler vacuum for the right wedge $x>|t|$ in Minkowski spacetime. It is useful to put this in a broader context. Adopting coordinates 
$t=\xi\sinh \chi$, $x=\xi\cosh \chi$, the Rindler wedge $x>|t|$ of Minkowski spacetime has metric $\xi^2\diff{\chi}^2 -\diff{\xi}^2-\diff{y}^2-\diff{z}^2$,
and any curve $\chi\mapsto (a\chi,1/a,y_0,z_0)$ with $a>0$ is a curve of proper acceleration $a$ in proper time parameterisation. Moreover, the energy density measured by an observer moving on a curve of constant $\xi$, in the thermal state of temperature $\beta^{-1}$ with respect to the coordinate $\chi$, is  
\begin{equation}\label{eq:Rindler_beta}
    \langle {:}T_{\mu\nu}u^\mu u^\nu{:}\rangle_\beta=
    \frac{(4\pi^2-\beta^2)(33\beta^2+12\pi^2)}{1440\pi^2\beta^4 \xi^4},
\end{equation}
reducing to  
\begin{equation}
  \langle {:}T_{\mu\nu}u^\mu u^\nu{:}\rangle_\infty = -\frac{11}{480\pi^2\xi^4}
\end{equation} 
for the Rindler ground state. At $\beta=2\pi$, the thermal state on Rindler spacetime is precisely the 
restriction of the Minkowski vacuum to the right wedge, which is why the energy density vanishes.
Because most references (e.g.,~\cite{CandelasDeutsch:1977,Dowker:1978,BrownOttewillPage:1986}) only discuss the conformally coupled stress-energy tensor (the `new improved' stress tensor) and~\cite{QEI2006FewsterPfenning} only considered the ground state without giving details, the relevant calculations are briefly reviewed in Appendix~\ref{sec:Rindler}. On restriction to the curve $\xi=1/a$ we see that
all these states have constant energy density
consistent with~\eqref{eq:AWECacc} (see Fig.~\ref{fig:rhobeta}) and that this bound is attained by the Rindler ground state. 
\begin{figure}[t]
\begin{center}
\begin{tikzpicture}[scale=0.4]
\draw [color=red,domain=0.8:8,samples=40,variable=\x] plot ({3*\x},{(1/\x^2-1)*(11+1/\x^2)});
\draw [->] (0,0)--++(3*8,0) node[below]{\raisebox{-10pt}{$\beta$}};
\draw [->] (0,-12)--(0,8) node[left]{$\rho$};
\draw[dotted] (0,-11)-++(3*8,0);
\draw (0,-11) node[left]{$-11$};
\draw (0,0) node[left]{$0$};
\foreach \x in {2,4,6,8,10,12,14}
    {\draw ({3*\x/2},0) node[below]{\raisebox{-10pt}{$\x\pi$}} --++(0,-0.4);}
\end{tikzpicture}
\end{center}
\caption{Plot of $\rho=(480\pi^2\xi^4)\langle {:}T_{\mu\nu}u^\mu u^\nu{:}\rangle_\beta$ on a curve of constant $\xi$, against $\beta$. The dotted line corresponds to the QEI bound~\eqref{eq:AWECacc}, which is attained as $\beta\to\infty$, corresponding to the Rindler ground state.}\label{fig:rhobeta}
\end{figure}
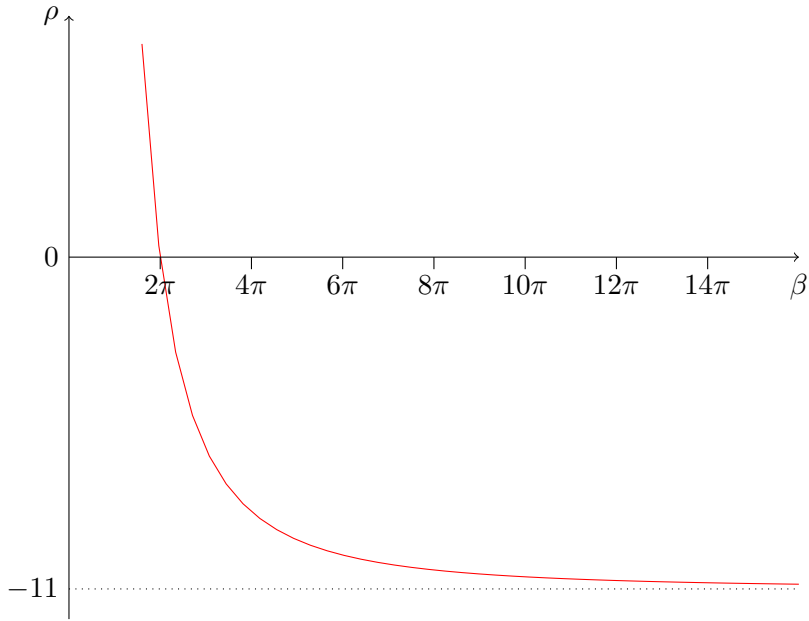

One should note that the Rindler ground state (and indeed all the $\beta$-KMS states other than the special case $\beta=2\pi$) is not defined on all of Minkowski, but just on the wedge $x>|t|$. The obvious divergence of the stress-energy tensor as $\xi\to 0+$ shows that the state cannot be extended as a Hadamard state beyond the wedge. The reason they satisfy the
Minkowski QEI is because this QEI is local and covariant -- see~\cite{QEI2006FewsterPfenning} for a discussion and many similar calculations, and~\cite{Fewster2007} for a more abstract viewpoint inspired by~\cite{BrFrVe03}. 
Nonetheless, it remains open as to whether equality in~\eqref{eq:AWECacc} can be attained by a Hadamard state defined on all of Minkowski; our conjecture is that one can find global Hadamard states that approximate the Rindler ground state sufficiently well that the bound~\eqref{eq:AWECacc} is satisfied in a limiting sense. These issues will be addressed elsewhere.  

Turning to the remaining stationary worldlines, the QEI is again consistent with a constant strictly negative energy density and we can again ask whether the bound is attained in any sense. Letaw and Pfautsch~\cite{LetawPfautsch:1981} considered the problem of quantising the field in coordinates associated with the various stationary worldlines and seeking an appropriate ground state. For the inertial, uniformly rotating, and semicubical parabolic worldlines, they concluded that the resulting state was precisely the Minkowski vacuum state. This means that we have no obvious candidate state associated with the uniformly rotating and semicubical parabolic worldlines with negative energy density. On the other hand, the catenary~\eqref{eq:cattraj} and loxodromic worldlines~\eqref{eq:loxotraj} both result in a Rindler vacuum state on the $x>|t|$ wedge, which is the causal hull of the worldline in question. One may compute the energy density along these curves in the Rindler vacuum, using the renormalised stress energy tensor given in Appendix~\ref{sec:Rindler}, yielding constant energy densities
$-(14 \cosh^2\chi+19)a^4/(1440\pi^2 \cosh^4\chi)$ in each case. 
This value is strictly greater than $-11a^4/(480\pi^2)$ for $\chi\neq 0$, which is greater than the most negative constant energy density consistent with the QEIs in these cases (see the remarks at the end of sections~\ref{sec:catenary} and~\ref{sec:loxodromic}). Thus they are are consistent with the QEIs but do not saturate them.

It therefore remains an open and intriguing question, whether there are (sequences of) Hadamard states that attain these QEI bounds (in a limiting sense). Resolving this question, and its analogues in 2+1 dimensions, may have relevance to proposed experiments to detect the Unruh effect using a laser beam whose intersection with a 
Bose-Einstein condensate follows a uniformly rotating worldline~\cite{Gooding_etal_PRL:2020}.

\bigskip
{\noindent\emph{Acknowledgements}
CJF thanks Alexander Strohmaier and Valter Moretti for useful conversations concerning the H\"ormander pseudo-topologies, and Aron Wall for posing an interesting direction for further study. The work of JT was in part funded by an EPSRC studentship at the University of Sheffield and a summer studentship from the University of York. We thank Elizabeth Winstanley for a reading of the manuscript and some helpful suggestions.} 
\appendix
\section{Details on the method}\label{sec:details}

We give further details on the method described in Section~\ref{sec:method} and prove the Lemma stated there.
Some aspects are treated using techniques of microlocal analysis -- we will be rather brief on those details, referring the reader to appropriate literature, while indicating the structure of the argument.

To start, we observe that, for $\epsilon>0$, $F(\sigma_\epsilon(x,0))$
can be written
\begin{equation}
F(\sigma_\epsilon(x,0)) = \int\frac{\mathrm{d}^3\kb}{(2\pi)^3}
\frac{e^{-\|\kb\|\epsilon-ik\cdot x}}{2\|\kb\|},
\end{equation}
where $k_\bullet=(\|\kb\|,\kb)$, $x^\bullet=(t,\xb)$. Thus 
for any $\varphi\in C_0^\infty(\mathbb{R}^4)$, the distribution
$u_\epsilon(x) = \varphi(x) F(\sigma_\epsilon(x,0))$ has Fourier transform
\begin{equation}
\hat{u}_\epsilon(k') = \int\frac{\mathrm{d}^3\kb}{(2\pi)^3}
\frac{e^{-\epsilon\|\kb\|}}{2\|\kb\|} \hat{\varphi}(k'-k).
\end{equation}
As $\hat{\varphi}$ decays faster than inverse polynomials and
$k\in \mathcal{N}^+$, where $\mathcal{N}^{+/-}$ is the bundle of future/past-pointing null covectors, it may be shown that
$F(\sigma_\epsilon(x,0))$ converges in $\mathscr{D}'_{\mathcal{N}^+}(\mathbb{R}^4)$ with respect to the H\"ormander pseudo-topology~\cite{Hoermander1}. It follows from this that
the vacuum $2$-point function $G_0(x,x')$ is the limit of
$F(\sigma_\epsilon(x,x'))=F(\sigma_\epsilon(x-x',0))$ 
in $\mathscr{D}'_{\mathcal{N}^+\times\mathcal{N}^-}(\mathbb{R}^4\times\mathbb{R}^4)$ and has wavefront set 
$\WF(G_0)\subset \mathcal{N}^+\times \mathcal{N}^-$, as is also 
known on general grounds because the state is Hadamard~\cite{Radzikowski:1996}.

These facts have various consequences. First, the pull-back of (any derivative operator acting on) $G_0$ by $\varphi:(s,s')\mapsto(\gamma(s),\gamma(s'))$ is well-defined because the set of normals to $\varphi$ does not intersect $\WF(G_0)$, essentially because timelike and null vectors cannot be orthogonal -- see~\cite{GenQEI2000CFewster} for details. Consequently the pull-back is well-defined by standard results explained in Chapter 8 of~\cite{Hoermander1} and has
wavefront set contained in $\varphi^*\WF(G_0)\subset \varphi^* (\mathcal{N}^+\times\mathcal{N}^-) =\Gamma\times (-\Gamma)$, where $\Gamma= \mathbb{R}\times (0,\infty) \subset T^*\mathbb{R}$.
Moreover, $\varphi^*G_0$ is the limit in $\mathscr{D}'_{\Gamma\times (-\Gamma)}(\mathbb{R}\times\mathbb{R})$ of $\varphi^* F\circ\sigma_\epsilon$ as $\epsilon\to 0+$, which justifies taking the pull-back under the $\epsilon\to 0+$ limits in~\eqref{eq:diff2G0}. 
Similar arguments apply to the convergence of $F'(\sigma_\epsilon(x,x'))$ and $F''(\sigma_\epsilon(x,x'))$ 
as $\epsilon\to 0+$.

Next, recall that the stationary worldline $\gamma$ has velocity $u=\dot{\gamma}$ evolving according to $u(s) =\exp(sM)u(0)$, for $M\in\mathfrak{so}(1,3)$ with dimensions of inverse time, and that the right-handed tetrad $e_a(s)$ obeys $e_a(s)=\exp(sM) e_a(0)$, with  $u(s)=e_0(s)$,  $\dot{u}(s)\in \Span \{e_1(s)\}$, 
and $\ddot{u}(s) \in  \Span \{e_0(s),e_1(s), \linebreak[0] e_2(s)\}$. 
The Cartesian coordinates of $\gamma(s)$, and components
of $e_a(s)$ are evidently real analytic in $s$. We extend $e_a$ to a smooth tetrad in a neighbourhood of $\gamma$ in an arbitrary fashion. Recall that the functions $C_a$ and $D_a$ are defined, in index-free notation, by
\begin{equation}
    C_a(s,s') = \eta( e_a(s),e_a(s')), \qquad
    D_a(s,s') = \eta(\gamma(s)-\gamma(s'), e_a(s)).
\end{equation}

We now prove the lemma needed in Section~\ref{sec:method}, which we restate for convenience.  
\begin{lemma}
(a) With the choice of tetrad just described, $C_a(s,s')$ and $D_a(s,s')$ are translationally invariant,
depending only on $s-s'$. There are entire analytic functions $G_a$ and $H_a$ such that
\begin{equation}
C_a(s,s')=G_a(\kappa^2(s-s')^2), \qquad D_a(s,s')D_a(s',s) = -(s-s')^2 H_a(\kappa^2(s-s')^2),
\end{equation}
where in the limit $z\to 0$,
\begin{equation}
    \sum_{a=0}^3 G_a(z)  = -2 + \frac{\tau^2+\upsilon^2}{\kappa^2}z + 
\frac{(\kappa\tau)^2- (\tau^2+\upsilon^2)^2}{\kappa^4} z^2 + \mathcal{O}(z^3),
\end{equation}
and
\begin{equation}
\sum_{a=0}^3 H_a(z) = 1 + \frac{z}{12} + \frac{\kappa^2+19\tau^2}{360\kappa^2}z^2 +\mathcal{O}(z^3).
\end{equation}

(b) The signed square geodesic separation of points along $\gamma$ obeys
\begin{equation}\label{eq:Upsilon}
    \sigma_0(\gamma(s),\gamma(s')) = -(s-s')^2 \Upsilon(\kappa^2(s-s')^2),
\end{equation}
where $\Upsilon$ is entire analytic with $\Upsilon(z) = 1 + \tfrac{1}{12}z + \tfrac{1}{360}(1-\tau^2/\kappa^2)z^2+\mathcal{O}(z^3)$ 
as $z\to 0$. Furthermore, for  $z\in [0,\infty)$, $\Upsilon(z)$ is real with $\Upsilon(z)\ge 1$. 
\end{lemma}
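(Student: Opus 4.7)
The plan is to prove parts (a) and (b) in parallel, using a common ingredient: Taylor expansion around the diagonal combined with Lorentz-invariance of $\eta$ and the Frenet--Serret relations with constant curvatures.

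For the translational invariance in part (a), I would use that $\exp(sM)\in SO(1,3)$, so
\[
C_a(s,s') = \eta\bigl(\exp(sM)e_a(0),\exp(s'M)e_a(0)\bigr) = \eta\bigl(e_a(0),\exp((s'-s)M)e_a(0)\bigr),
\]
which depends only on $s-s'$. For $D_a$ I would use the integral representation
\[
\gamma(s)-\gamma(s') = \exp(sM)\int_{s'-s}^{0}\exp(\rho M)\dot\gamma(0)\diff\rho,
\]
and again exploit Lorentz-invariance of $\eta$ to reduce $D_a(s,s')$ to a function of $s-s'$ alone. Entire analyticity in $s-s'$ is inherited from that of $\exp(sM)$. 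The symmetry of $\eta$ makes $C_a(s,s')$ even in $s-s'$, hence entire in $(s-s')^2$; defining $G_a(z):=C_a(s,s'-z^{1/2}/\kappa)$ gives the required form. For $D_a$, the vanishing of $D_a(s,s)$ produces a factorisation $D_a(s,s')=(s-s')\tilde p_a(s-s')$ with $\tilde p_a$ entire, and hence $D_a(s,s')D_a(s',s)=-u^2\,\tilde p_a(u)\tilde p_a(-u)$ with $u=s-s'$, where the last factor is manifestly even in $u$ and entire, yielding $H_a$.

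To compute the Taylor coefficients in the sums $\sum_a G_a$ and $\sum_a H_a$, I would Taylor expand $e_a(s')$ and $\gamma(s')$ around $s$, using the extension of \eqref{eq:e0_derivs}, namely $\tfrac{d^k}{ds^k}e_a = (K^k)_a{}^b e_b$ from \eqref{eq:tetradderivatives}, to rewrite every derivative of the tetrad and of $\gamma=\int\!e_0$ purely in terms of the tetrad at $s$ and polynomials in $\kappa,\tau,\upsilon$. The calculation then reduces to contracting such expressions with $\eta$ at coincidence, which gives $\eta(e_a,e_b)=\eta_{ab}$, and summing $(K^k)_a{}^a\eta^{aa}$ (no sum on $a$) for $k=2,4$; this produces the stated coefficients of $z$ and $z^2$ in $\sum_a G_a$. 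A parallel, slightly longer expansion, multiplying the Taylor series for $D_a(u)$ by that for $D_a(-u)$ and summing over $a$, produces $\sum_a H_a$. These calculations are mechanical and best deferred, as the authors do, to Appendix~\ref{sec:Taylor}.

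For part (b), existence and analyticity of $\Upsilon$ follow the same pattern: $\sigma_0(\gamma(s),\gamma(s'))=-\eta(\gamma(s)-\gamma(s'),\gamma(s)-\gamma(s'))$ is entire in $(s,s')$, symmetric, and vanishes to order $(s-s')^2$ on the diagonal with coincidence limit $-\eta(\dot\gamma,\dot\gamma)=-1$; therefore $-\sigma_0/(s-s')^2$ extends to an entire even function of $s-s'$, which we identify with $\Upsilon(\kappa^2(s-s')^2)$. The Taylor coefficients in \eqref{eq:upsilongeneral} are then read off from the $\gamma$-expansion already performed for $D_a$. The main non-computational step is the positivity claim $\Upsilon(z)\ge 1$ for $z\in[0,\infty)$, which I would deduce from the reverse triangle inequality in Minkowski spacetime: the Lorentzian geodesic distance $\sqrt{-\sigma_0(\gamma(s),\gamma(s'))}$ between two events on a future-directed timelike curve is at least the proper time $|s-s'|$ along that curve, so $-\sigma_0(\gamma(s),\gamma(s'))\ge (s-s')^2$, i.e.\ $\Upsilon\ge 1$.

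The only genuinely substantial obstacle is the bookkeeping in the Taylor expansion: getting the $\mathcal{O}(z^2)$ coefficients of $\sum_a H_a$ and of $\Upsilon$ requires expanding $\gamma(s)-\gamma(s')$ to fifth order and tracking many terms in $\kappa,\tau,\upsilon$. Everything else is either a one-line invariance argument or a standard geometric inequality.
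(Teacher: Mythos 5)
Your proposal is correct and follows essentially the same route as the paper's proof in Appendix~\ref{sec:details}: translational invariance via the $\eta$-invariance of $\exp(sM)$, evenness/vanishing-on-the-diagonal arguments to obtain entire functions of $\kappa^2(s-s')^2$, deferral of the Taylor coefficients to the Frenet--Serret computation of Appendix~\ref{sec:Taylor}, and the maximising property of timelike geodesics for $\Upsilon(z)\ge 1$. The only cosmetic differences are that you express $\gamma(s)-\gamma(s')$ by an integral representation where the paper differentiates $D_a$ in $s'$ and integrates back, and you should note (as the paper does) that the inertial case $\kappa=0$ must be treated separately since the formulas divide by $\kappa^2$.
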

\begin{proof}
(a) For inertial worldlines, $e_a(s)$ is constant and the result holds trivially with $G_0(z)\equiv 1$, $G_i(z)\equiv 1$, $H_0(z)\equiv -1$, $H_i(z)\equiv 0$.
From now on we may assume that $\kappa$ is nonzero.

It follows from~\eqref{eq:eadef} that 
\begin{equation}
    C_a(s,s')=\eta(\exp(s'M)e_a(0),\exp(sM)e_a(0)) = \eta(e_a(0),\exp((s-s')M)e_a(0)),
\end{equation}
so $C_a$ depends only on $s-s'$. As every component
of the matrix $\exp(sM)$ is analytic, and because
$C_a(s,s')=C_a(s',s)$, we deduce that $C_a(s,s')=G_a(\kappa^2(s-s')^2)$ for dimensionless entire analytic functions $G_a$. 

Next, observe that 
\begin{equation}\label{eq:Da_deriv}
\frac{\partial}{\partial s'}D_a(s,s') = - \eta(e_0(s'),e_a(s)) = -\eta(e_0(0),\exp((s-s')M)e_a(0)).
\end{equation}
Integrating with respect to $s'$ and using $D_a(s,s) = 0$ we may deduce that $\kappa D_a(s,s')$ is a dimensionless entire analytic function of $(s-s')\kappa$.
Again using $D_a(s,s)=0$ and because \eqref{eq:Da_deriv} gives $\partial D_0/\partial s'|_{s'=s}=-1$ and $\partial D_i/\partial s'|_{s'=s}=0$ for $i=1,2,3$, we have 
\begin{equation}
    D_0(s,s') = (s-s')\left(1 + \mathcal{O}((\kappa(s-s'))^2)\right), \qquad
    D_i(s,s') = \kappa^{-1}\mathcal{O}((\kappa(s-s'))^2),
\end{equation}
where we have also used the fact that $D_0(s,s')=-D_0(s',s)$ as a consequence of~\eqref{eq:Da_deriv}. Because $D_a(s,s')D_a(s',s)$ is invariant under interchange of $s$ and $s'$, we now have
\begin{equation}
    D_a(s,s')D_a(s',s) =-(s-s')^2 H_a(\kappa^2 (s-s')^2)
\end{equation}
for dimensionless entire analytic functions $H_a$.
The Taylor series of $G_a$, $H_a$ and their sums, are computed up to second order in Appendix~\ref{sec:Taylor}. 

(b) Next, we study the geodesic separation between $\gamma(s)$ and $\gamma(s')$. We note that 
\begin{equation}
\frac{\partial}{\partial s}   \sigma_0(\gamma(s),\gamma(s')) = -2D_0(s,s')
\end{equation}
depends only on $s-s'$, so $\sigma_0(\gamma(s),\gamma(s')) = \Sigma(s-s')+ f(s')$
and on considering $s=s'$ we find that $f$ is constant and may be absorbed into $\Sigma$, which is also seen to be even. The first terms in its Taylor expansion are easily found: $\Sigma(0)=0$, while 
\begin{equation}
    \Sigma''(s-s') = -2\eta(u(s),u(s')), \qquad
    \Sigma^{(4)}(s-s') = 2\eta(\dot{u}(s),\dot{u}(s')), \qquad
    \Sigma^{(6)}(s-s') = -2\eta(\ddot{u}(s),\ddot{u}(s'))
\end{equation}
giving
\begin{equation}
    \Sigma''(0) = -2, \qquad \Sigma^{(4)}(0)= -2\kappa^2, \qquad
    \Sigma^{(6)}(0) = -2\kappa^2(\kappa^2-\tau^2)
\end{equation}
using~\eqref{eq:e0_derivs}.
Accordingly, we have established~\eqref{eq:Upsilon}, the analyticity of $\Upsilon$, and also the expansion 
\begin{equation}
    \Upsilon(z) = 1 + \frac{z}{12} + \frac{\kappa^2-\tau^2}{360\kappa^2} z^2 + 
    \mathcal{O}(z^3)
\end{equation}
as $z\to 0$.
Finally, as $\gamma(0)$ and $\gamma(s)$ are connected by a smooth timelike curve, the timelike geodesic that connects them maximises proper time. 
Thus $-\sigma_0(\gamma(s),\gamma(0))\ge s^2$ for all $s\in \mathbb{R}$ and consequently, $\Upsilon(z)\ge 1$ for $z\in [0,\infty)$, which concludes the proof.
\end{proof}

Finally, we explain how the identity~\eqref{eq:keyfact} may be proved. First note that 
\begin{align*}
    \sigma_\epsilon(\gamma(s),\gamma(s')) &= \sigma_0(\gamma(s),\gamma(s')) + 2i\epsilon (\gamma^0(s)-\gamma^0(s')) +\epsilon^2 \\ &= 
    -(s-s')^2 \Upsilon(\kappa^2(s-s')^2) + 2i\epsilon (\gamma^0(s)-\gamma^0(s')) +\epsilon^2 \\
    &=  -(s-s'-i\epsilon)^2 \Upsilon(\kappa^2(s-s')^2)+ \epsilon \Psi(s,s')+\epsilon^2\Xi(s,s')
\end{align*}
for smooth (indeed analytic) functions $\Psi$ and $\Xi$. 
Let $S$ be the difference between the distribution on the left-hand side of~\eqref{eq:keyfact} and the distribution on the right-hand side. Then, using the fact that $\Upsilon$ is nonvanishing on the real axis, $S$ takes the form
\begin{equation}\label{eq:Sform}
    S(s,s') = \lim_{\epsilon\to 0+}\sum_{r=1}^{2k} \frac{ 
     \epsilon^r S_r(s,s')
    }{\sigma_\epsilon(\gamma(s),\gamma(s'))^k (s-s'-i\epsilon)^{2k}}
\end{equation}
for smooth functions $S_r\in C^\infty(\mathbb{R}^2)$ ($1\le r\le 2k$). All that is needed now is to show that the distributional limit
\begin{equation}
 \lim_{\epsilon\to 0+} \frac{1}{\sigma_\epsilon(\gamma(s),\gamma(s'))^k (s-s'-i\epsilon)^{2k}} 
\end{equation}
exists, whereupon $S$ must vanish due to the strictly positive powers of $\epsilon$ in~\eqref{eq:Sform}. The required result now
follows from the sequential continuity of the distributional product
with respect to the H\"ormander pseudo-topology (Theorem 2.5.10 in~\cite{Hormander_FIOi}), and the fact that
both $1/\sigma_\epsilon(\gamma(s),\gamma(s'))$ and 
\begin{equation}
\frac{1}{s-s'-i\epsilon} = i\int_0^\infty \diff{k}\, e^{-ik(s-s'-i\epsilon)}
\end{equation}
have limits as $\epsilon\to 0+$ in $\mathscr{D}_{\Gamma\times (-\Gamma)}'(\mathbb{R}^2)$,
where, as before, $\Gamma= \mathbb{R}\times (0,\infty) \subset \dot{T}^*\mathbb{R}$.

\section{Taylor series calculation} \label{sec:Taylor}
We compute the Taylor series of both $G_a$ and $H_a$ 
up to second order, using equations~\eqref{eq:CandDdefn}, \eqref{eq:CtoGandDtoH} and \eqref{eq:CandDaltdefn}. Recalling that $C_a(s,s') = G_a(\kappa^2(s-s')^2)$, one can expand the right hand side into a Taylor series in $s-s'$ about the point $s-s'=0$ and then differentiate to yield
\begin{align}
    -\frac{1}{2\kappa^2}\frac{\partial^2C_a}{\partial s\partial s'} &= 
    G'_a(0) + 3\kappa^ 2(s-s')^2 G_a''(0) + \mathcal{O}((s-s')^4) \\
    \frac{1}{12\kappa^4}\frac{\partial^4C_a}{\partial^2 s\partial^2 s'} &= 
      G_a''(0) + \mathcal{O}((s-s')^2)
\end{align}
as $s-s'\to 0$.
Differentiating equation~\eqref{eq:CandDdefn} and setting $s=s'=0$, one easily finds
\begin{equation}
    G'_a(0) = -\frac{\eta(\dot{e}_a(0),\dot{e}_a(0))}{2\kappa^2},\qquad
    G''_a(0) = \frac{\eta(\ddot{e}_a(0),\ddot{e}_a(0))}{12\kappa^4}
\end{equation} 
by equating powers of $s-s'$.
The derivatives of the $e_a$ can be read off from the generalized 
Frenet-Serret equations~\eqref{eq:FS} and its derivatives~\eqref{eq:tetradderivatives}, allowing us to express $G_a'(0)$ and $G_a''(0)$ in terms of curvature invariants. 

An easy computation shows that
\begin{equation}
    G_a'(0) = \dfrac{1}{2}\eta_{0a}+\dfrac{\kappa^2-\tau^2}{2\kappa^2}\eta_{1a}-\dfrac{\tau^2+\upsilon^2}{2\kappa^2}\eta_{2a}-\dfrac{\upsilon^2}{2\kappa^2}\eta_{3a}
\end{equation}
and 
\begin{equation}
    G_a''(0) = \dfrac{\kappa^2-\tau^2}{12\kappa^2}\eta_{0a}+\dfrac{\tau^2\upsilon^2+(\kappa^2-\tau^2)^2}{12\kappa^4}\eta_{1a}-\dfrac{\kappa^2\tau^2-(\tau^2+\upsilon^2)^2}{12\kappa^4}\eta_{2a}+\upsilon^2\dfrac{\tau^2+\upsilon^2}{12\kappa^4}\eta_{3a},
\end{equation}
where $\eta(e_a(0),e_b(0))=\eta_{ab}$ by orthogonality of the tetrad field.
Reconstructing $G_a$ using a Taylor series therefore yields
\begin{align}
    G_a(z) &= \eta_{aa} + \dfrac{1}{2\kappa^2}z\left(\eta_{0a}\kappa^2-\eta_{1a}(\tau^2-\kappa^2)-\eta_{2a}(\upsilon^2+\tau^2)-\eta_{3a}\upsilon^2\right)
    \notag\\
    &\quad +\dfrac{z^2}{24\kappa^4}\left(\eta_{0a}\kappa^2(\kappa^2-\tau^2)+\eta_{1a}(\tau^2\upsilon^2+(\kappa^2-\tau^2)^2)-\eta_{2a}(\kappa^2\tau^2-(\tau^2+\upsilon^2)^2)+\eta_{3a}\upsilon^2(\tau^2+\upsilon^2)\right)\notag\\
    &\quad +\mathcal{O}(z^3).
\end{align}
Summing, we obtain
\begin{equation}
    \sum_{a=0}^3 G_a(z)  = -2 + \frac{\tau^2+\upsilon^2}{\kappa^2}z + 
\frac{(\kappa\tau)^2- (\tau^2+\upsilon^2)^2}{\kappa^4} z^2 + \mathcal{O}(z^3)
\end{equation}
as $z\to 0$.

Applying exactly the same methodology to $H_a$, one writes $E_a(s,s') = D_a(s,s')D_a(s',s)$ so that
\begin{align}
E_a(s,s') &= -(s-s')^2 H_a(\kappa^2(s-s')^2) \notag\\ 
&= 
-(s-s')^2 H_a(0) - \kappa^2(s-s')^4 H_a'(0) - \tfrac{1}{2}\kappa^4 (s-s')^6 H_a''(0) + O((s-s')^8).
\end{align}
Differentiation yields
\begin{align}
    \dfrac{\partial^2 E_a}{\partial s\partial s'} &= 2H_a(0)+12\kappa^2(s-s')^2H_a'(0)+ 15\kappa^4 (s-s')^4 H_a''(0)+ \mathcal{O}((s-s')^6) \\
    \dfrac{\partial^4 E_a}{\partial^2 s\partial^2 s'} &= -24\kappa^2 H_a'(0) -
    180\kappa^4(s-s')^2 H_a''(0)+ \mathcal{O}((s-s')^4) \\
    \dfrac{\partial^6 E_a}{\partial^3 s\partial^3 s'} &= 360\kappa^4 H_a''(0)+ \mathcal{O}((s-s')^4),
\end{align}
from which $H_a(0)$, $H_a'(0)$ and $H_a''(0)$ can be obtained differentiating equation~\eqref{eq:CandDaltdefn} using Leibniz' rule and subsequently setting $s=s'=0$.
It is easily verifiable that this yields
\begin{align}
    H_a(0) &= \left[\eta(\dot{\gamma}(0),e_a(0))\right]^2 = \left[\eta(e_0(0),e_a(0))\right]^2, \\
    H_a'(0) &= -\dfrac{1}{4\kappa^2}[\eta(\ddot{\gamma}(0),e_a(0))]^2+\dfrac{1}{3\kappa^2}\eta(\dot{\gamma}(0),e_a(0))\eta(\dddot{\gamma}(0),e_a(0)), \\
    H_a''(0) &= \dfrac{1}{18\kappa^4}[\eta(\dddot{\gamma}(0),e_a(0))]^2-\dfrac{1}{12\kappa^4}\eta(\ddot{\gamma}(0),e_a(0))\eta(\gamma^{(4)}(0),e_a(0))\notag \\&
    \qquad+\dfrac{1}{30\kappa^4}\eta(\dot{\gamma}(0),e_a(0))\eta(\gamma^{(5)}(0),e_a(0)),
\end{align}
and after some straightforward computation,
\begin{align}
    H_{a}(0) &= \eta_{0a} \\
    H_a'(0) &= \dfrac{1}{3}\eta_{0a}+\dfrac{1}{4}\eta_{1a} \\
    H_a''(0) &= (\eta_{0a})^2\left(\dfrac{1}{18}+\dfrac{\kappa^2-\tau^2}{30\kappa^2}\right)- \dfrac{\kappa^2-\tau^2}{12\kappa^2}(\eta_{1a})^2+ \dfrac{\tau^2}{18\kappa^2}(\eta_{2a})^2 \notag \\
    &= \eta_{0a}\left(\dfrac{1}{18}+\dfrac{\kappa^2-\tau^2}{30\kappa^2}\right)+ \dfrac{\kappa^2-\tau^2}{12\kappa^2}\eta_{1a}- \dfrac{\tau^2}{18\kappa^2}\eta_{2a}
\end{align}
using the fact that $(\eta_{0a})^2 = \eta_{0a}$ and $(\eta_{ia})^2 = -\eta_{ia}$ for $i=1,2,3$, as can be explicitly seen in the calculation of $H_a''(0)$.
Reconstructing $H_a$ using a Taylor series, one obtains
\begin{multline}
    H_a(z) = \eta_{0a}+\dfrac{1}{12}z\left(4\eta_{0a}+3\eta_{1a}\right)\\+\dfrac{1}{360\kappa^2}z^2\left(\eta_{0a}(10\kappa^2+6(\kappa^2-\tau^2))+15\eta_{1a}(\kappa^2-\tau^2)-10\eta_{2a}\tau^2\right)+\mathcal{O}(z^3),
\end{multline}
and summing,
\begin{equation}
\sum_{a=0}^3 H_a(z) = 1 + \frac{z}{12} + \frac{\kappa^2+19\tau^2}{360\kappa^2}z^2 +\mathcal{O}(z^3).
\end{equation}

\section{Wick square} \label{sec:wicksqanalysis}

In this Appendix we show how a quantum inequality for the Wick square can be obtained along stationary trajectories. This is a simpler calculation than the one used for the energy density and we shall be relatively brief.

Recall that the general QEI involves a (sum of) pull-backs of a suitable differential operator acting on the two-point function,  
\begin{equation}
    T(s,s') = \langle \mathcal{Q}\phi(\gamma(s))\mathcal{Q}\phi(\gamma(s'))\rangle_{\omega_0} = ((\mathcal{Q}\otimes\mathcal{Q})G_0)(\gamma(s),\gamma(s')).
\end{equation}
For a quantum inequality on the Wick square, the operator $\mathcal{Q}$ can be simply identified as the identity, so $T(s,s')$ can be written in this case as
\begin{equation}
    T(s,s') = G_0(\gamma(s),\gamma(s')).
\end{equation}
Using the results of Section \ref{sec:method} and in particular, equation \eqref{eq:keyfact}, the two-point function can be neatly expressed as
\begin{equation}
    T(s,s') = \lim_{\epsilon\to 0+}\dfrac{1}{4\pi^2\sigma_{\epsilon}(\gamma(s),\gamma(s'))} = -\lim_{\epsilon\to 0+}\dfrac{1}{4\pi^2(s-s'-i\epsilon)^2}\left[\Upsilon\left(\kappa^2(s-s')^2\right)\right]^{-1}.
\end{equation}
As $\Upsilon(\kappa^2 s^2) \ge 1$ for $s\in\mathbb{R}$ by the Lemma, the entire function $\Upsilon(z)$ is nonvanishing on the real axis, and $\Upsilon(z)^{-1}$ is therefore analytic in a neighbourhood of the real axis. Using~\eqref{eq:upsilongeneral} we may write $\Upsilon(z)^{-1} = 1 + zJ(z)$, where $J$ is also analytic in a neighbourhood of the real axis, with $J(0)=-1/12$.
Because $0<1+zJ(z)\le 1$ for $z\ge 0$, we may deduce that $0\le -J(z)<1/z$ for $z>0$.

We can now split 
the pulled back two-point function into its singular and regular parts as $T(s,s') = T_{\text{sing}}(s-s')+T_{\text{reg}}(s-s')$, where 
\begin{equation}
    T_{\text{sing}}(s) = -\dfrac{1}{4\pi^2}\lim_{\epsilon\to 0+}\dfrac{1}{(s-i\epsilon)^2},
\end{equation}
and
\begin{equation}  
T_{\text{reg}}(s) = -\dfrac{J(\kappa^2 s^2)}{4\pi^2}\lim_{\epsilon\to 0+}\dfrac{\kappa^2 s^2}{(s-i\epsilon)^2}=  -\dfrac{\kappa^2 J(\kappa^2 s^2)}{4\pi^2} ,
\end{equation}
with $T_{\text{reg}}(0) = \kappa^2/(48\pi^2)$. Here we used the identity $\lim_{\epsilon\to 0+} x^2/(x-i\epsilon)^2=\lim_{\epsilon\to 0+} (x-i\epsilon)^2/(x-i\epsilon)^2=1$ of distributional limits, because $g(z) = z^2$ is entire, while $f(z) = z^{-2}$ is analytic in the open lower half-plane $Z\subset \mathbb{C}$ and obeys $\sup_{z\in Z}\lvert f(z) (\Im z)^2\rvert =1$ (see the argument below equation \eqref{eq:Tacc2}).

Observing that the two-point function given above is translationally invariant, we can use the bound given by \eqref{eq:genQEI2} and \eqref{eq:Qboundinitial} and thus write
\begin{equation} \label{eq:QEIQevenWick}
    \int \diff{s} |g(s)|^2\langle\mathbf{:}(\mathcal{Q}\phi)^2\mathbf{:}\rangle_\omega(\gamma(s)) \geq - \int_{-\infty}^\infty \diff{\alpha} \lvert \hat{g}(\alpha) \rvert^2 Q_{\text{even}}(\alpha)
\end{equation}
where 
\begin{equation} \label{eq:QevenWick}
    Q_{\text{even}}(\alpha) = \dfrac{1}{2\pi^2}\left[\int_{-\infty}^0 \hat{T}(u)\diff{u}+\int_{0}^\alpha \hat{T}_{\text{odd}}(u)\diff{u}\right].
\end{equation}
The Fourier transform of $T_{\text{sing}}$ is easily shown to be $\hat{T}_{\text{sing}}(u)=\tfrac{u}{2\pi}\Theta(u)$. Again, $T_{\text{reg}}$ is smooth, real and even on $\mathbb{R}$, decaying like $\mathcal{O}(s^{-2})$ as $\lvert s \rvert \to \infty$ because of the decay of $J$. Evidently $T_{\text{reg}}$ does not contribute to $\hat{T}_{\text{odd}}$ as $T_{\text{reg}}$ is absolutely integrable and has a well defined, continuous, real and even Fourier transform. In this case, $T_{\text{sing}}$ is actually universal; the information relating to the specific worldline is encoded in $T_{\text{reg}}$, as can also be seen below in Eq.~\eqref{eq:QIfinalgeneral}. Clearly, $\hat{T}_{\text{sing}}$ does not contribute to the first term in~\eqref{eq:QevenWick} and, recalling that $T_{\text{reg}}$ is even, the odd part of $\hat{T}$ is
\begin{equation}
    \hat{T}_{\text{odd}}(u) = \dfrac{u}{4\pi},
\end{equation}
and so $Q_{\text{even}}$ is given in the form
\begin{align}
    Q_{\text{even}}(\alpha) &= \dfrac{1}{2\pi^2} \left[ \int_{-\infty}^0 \diff{u} \,\hat{T}_{\text{reg}}(u)  + \dfrac{1}{4\pi}\int_0^\alpha  \diff{u} \ u \right] 
    = \dfrac{1}{16\pi^3}\alpha^2 + \dfrac{T_{\text{reg}}(0)}{2\pi}.
\end{align}
In direct analogy to the analysis of the energy density, the evenness of $T_{\text{reg}}$ and the Fourier inversion formula have been used. Inserting this into~\eqref{eq:QEIQevenWick} gives the QI bound
\begin{equation} \label{eq:QIfinalgeneral}
    \int \diff{s} |g(s)|^2\langle\mathbf{:}\phi^2\mathbf{:} \rangle_\omega(\gamma(s)) \geq -\dfrac{1}{8\pi^2}\int_{-\infty}^\infty \diff{s} \left(\lvert g'(s) \rvert^2 + C \lvert g(s) \rvert^2\right).
\end{equation}
where $C = 8\pi^2T_{\text{reg}}(0)=\kappa^2/6$.

Considering the scaling behaviour, using the same test function $g_{\lambda}(s)=\lambda^{-1/2}g(\lambda/s)$ as in the case for the QEI~\eqref{eq:scaling}, one can easily verify that
\begin{equation}  
    \int \diff{s} |g_\lambda(s)|^2\langle\mathbf{:}\phi^2\mathbf{:} \rangle_\omega(\gamma(s)) \geq -\dfrac{\|g'\|^2}{8\pi^2\lambda^2} - \dfrac{\kappa^2\|g\|^2}{48\pi^2},
\end{equation}
where again $\|g\|^2$ denotes the $L^2$-norm of the function $g$. Taking the limit $\lambda \to \infty$ yields the following formula,
\begin{equation}\label{eq:longtimeWick}
 \liminf_{\lambda\xrightarrow{}\infty}\int_{-\infty}^\infty \diff{s} |g_\lambda(s)|^2\langle\mathbf{:}\phi^2\mathbf{:} \rangle_\omega(\gamma(s)) \geq - \dfrac{\kappa^2}{48\pi^2}
\end{equation}
when considering the functions $g$ such that $\|g\|^2 = 1$. Physically, since one can interpret $12\langle \mathbf{:}\phi^2\mathbf{:}\rangle$ as the square of a local temperature~\cite{BuchholzSchlemmer:2007}, 
states with negative expected Wick square are regarded
as being locally out of equilibrium. The above bound
therefore quantifies the extent to which the thermal interpretation may fail uniformly along these worldlines,
in terms of their proper acceleration. This raises an intriguing question as to whether there are states that would saturate this bound -- something quite relevant to the Unruh experiments discussed in Section~\ref{sec:discussion}.

In relation to the Unruh effect, a study of the detailed balance temperature obtained from the excitation of an Unruh-DeWitt detector carried along stationary worldlines can be found in~\cite{GoodJuarezAubryetal:2020}. Here the quantum field is assumed to be in the vacuum state, and the temperature depends not only on the curvature invariants but also on the energy gap of the detector. Although this is a different focus from our results, which concern averages of the Wick square in arbitrary Hadamard states, there are technical similarities, because the pulled back vacuum Wightman function plays a key role in both.  It would be interesting to understand whether some of the methods described here can be used to corroborate the numerical results of~\cite{GoodJuarezAubryetal:2020}.

\section{Computation of the renormalised stress-tensor for thermal and ground states on Rindler spacetime}
\label{sec:Rindler}

The Feynman propagator for a thermal state at inverse temperature $\beta$ of the massless scalar field in Minkowski spacetime was given by Dowker~\cite{Dowker:1978} and the Wightman functions (including for higher spin) can be found in~\cite{MorettiVanzo:1996}. Adopting coordinates 
$t=\xi\sinh \chi$, $x=\xi\cosh \chi$, 
the Rindler wedge $x>|t|$ of Minkowski spacetime has metric $\xi^2\diff{\chi}^2 -\diff{\xi}^2-\diff{y}^2-\diff{z}^2$,
and any curve $\chi\mapsto (a\chi,1/a,y_0,z_0)$ with $a>0$ is a curve of proper acceleration $a$ in proper time parameterisation. Given two points $x=(\chi,\xi,y,z)$ and $x'=(\chi',\xi',y',z')$, write
\begin{equation}
    \alpha(x,x') = \cosh^{-1}\left( \frac{\xi^2+(\xi')^2+(y-y')^2+(z-z')^2}{2\xi\xi'}\right),
\end{equation}
whereupon the Wightman function $G_\beta(x,x')=\langle\phi(x)\phi(x')\rangle_\beta$ for the temperature $\beta^{-1}$ KMS state with respect to the coordinate $\chi$ is
\begin{equation}
    G_\beta(x,x') = \frac{1}{4\pi\beta\xi\xi' \sinh\alpha(x,x')}\left(\frac{\sinh (2\pi\alpha(x,x')/\beta)}{\cosh(2\pi\alpha(x,x')/\beta)- \cosh(2\pi(\chi-\chi'-i\epsilon)/\beta)}\right).
\end{equation}
The $\beta=2\pi$ case coincides with the restriction of the Minkowski vacuum state to the wedge, while the zero temperature limit has Wightman function
\begin{equation}
    G_\infty(x,x') =- \frac{\alpha(x,x')}{4\pi^2\xi\xi' \sinh\alpha(x,x') (\alpha(x,x')^2 - (\chi-\chi'-i\epsilon)^2)} .
\end{equation}
To obtain the renormalised (minimally coupled) stress-energy tensor, we first apply suitable derivatives to $G_\beta-G_{2\pi}$ and take the limit $x'\to x$, obtaining
\begin{equation}
    \langle {:}(\nabla_\mu\phi)(x) (\nabla_\nu\phi)(x){:}\rangle_\beta =
    \frac{4\pi^2-\beta^2}{1440\pi^2\beta^4 \xi^4}\left(
   (16\pi^2+14\beta^2)\hat{u}_\mu \hat{u}_\nu + 30\beta^2 \hat{a}_\mu \hat{a}_\nu -(4\pi^2+11\beta^2)\eta_{\mu\nu}\right),
\end{equation}
where, at spacetime position $x$, $\hat{u}^\mu=\xi^{-1}(\partial_\chi)^\mu$ is the $4$-velocity of the curve through $x$ with constant $\xi$, $y$ and $z$, and $\hat{a}^\mu=(\partial_\xi)^\mu$ is the unit spacelike vector parallel to the $4$-acceleration of this curve. Consequently,
\begin{equation}
\langle {:}T_{\mu\nu}{:}\rangle_\beta =
\frac{4\pi^2-\beta^2}{1440\pi^2\beta^4 \xi^4}\left(
   (16\pi^2+14\beta^2)\hat{u}_\mu \hat{u}_\nu + 30\beta^2 \hat{a}_\mu \hat{a}_\nu -(4\pi^2-19\beta^2)\eta_{\mu\nu}\right)
\end{equation}
and the result for Rindler ground state is obtained by taking $\beta\to\infty$, giving
\begin{equation}
\langle {:}T_{\mu\nu}{:}\rangle_\infty =-
\frac{1}{1440\pi^2  \xi^4}\left(
    14 \hat{u}_\mu \hat{u}_\nu + 30  \hat{a}_\mu \hat{a}_\nu +19\eta_{\mu\nu}\right).
\end{equation}
Computing the energy density on curves of constant $\xi$ yields~\eqref{eq:Rindler_beta}.

\end{document}